\documentclass[a4paper,UKenglish]{lipics-v2018}
 
\usepackage[usenames]{xcolor}
\usepackage{xspace}
\usepackage{enumitem}

  
\bibliographystyle{plainurl}

\title{On Optimal Polyline Simplification Using the Hausdorff and Fr\'echet Distance}

\titlerunning{ }
 
\author{Marc van Kreveld}{Department of Information and Computing Sciences, Utrecht University\\{Utrecht, The Netherlands}}{m.j.vankreveld@uu.nl}{}{Supported by The Netherlands Organisation for Scientfic Research on grant no. 612.001.651}
\author{Maarten L\"{o}ffler}{Department of Information and Computing Sciences, Utrecht University\\{Utrecht, The Netherlands}}{m.loffler@uu.nl}{}{Supported by The Netherlands Organisation for Scientfic Research on grant no. 614.001.504}
\author{Lionov Wiratma}{Department of Information and Computing Sciences, Utrecht University\\{Utrecht, The Netherlands}\\Department of Informatics, Parahyangan Catholic University\\{Bandung, Indonesia}}{l.wiratma@uu.nl;lionov@unpar.ac.id}{}{Supported by The Ministry of Research, Technology and Higher Education of Indonesia (No. 138.41/E4.4/2015)}

\authorrunning{M. van Kreveld, M. L\"{o}ffler and L. Wiratma}

\Copyright{Marc van Kreveld, Maarten L\"{o}ffler and Lionov Wiratma}

\subjclass{Theory of computation $\rightarrow$ Computational geometry}

\keywords{polygonal line simplification, Hausdorff distance, \frechet distance, Imai-Iri, Douglas-Peucker}

\category{}

\relatedversion{}

\supplement{}

\funding{}


\EventEditors{Bettina Speckmann and Csaba D. T{\'o}th}
\EventNoEds{2}
\EventLongTitle{34th International Symposium on Computational Geometry (SoCG 2018)}
\EventShortTitle{SoCG 2018}
\EventAcronym{SoCG}
\EventYear{2018}
\EventDate{June 11--14, 2018}
\EventLocation{Budapest, Hungary}
\EventLogo{socg-logo} 
\SeriesVolume{99}
\ArticleNo{} %
\nolinenumbers 
\hideLIPIcs  

\newcommand{\mkmbb}[1]{\ensuremath{\mathbb{#1}}\xspace}

\newcommand{\R}{\mkmbb{R}}
\newcommand{\frechet}{Fr\'{e}chet\xspace}

\newcommand{\oli}{\overline}
\newcommand{\eps}{\varepsilon}

\newcommand{\II}{\ensuremath{\mbox{\it II}}\xspace} 
\newcommand{\DPH}{\ensuremath{\mbox{\it DP}_H}\xspace} 
\newcommand{\IIH}{\ensuremath{\mbox{\it II}_H}\xspace} 
\newcommand{\DPF}{\ensuremath{\mbox{\it DP}_F}\xspace} 
\newcommand{\IIF}{\ensuremath{\mbox{\it II}_F}\xspace} 
\newcommand{\OPTH}{\ensuremath{\mbox{\it OPT}_H}\xspace} 
\newcommand{\OPTF}{\ensuremath{\mbox{\it OPT}_F}\xspace} 

\newtheorem{obs}{Obervation}

\nolinenumbers

\begin{document}

\maketitle

\begin{abstract}
We revisit the classical polygonal line simplification problem and study it using the Hausdorff distance and Fr\'echet distance.
Interestingly, no previous authors studied line simplification under these measures in its pure form, namely: for a given $\eps>0$, choose a minimum size subsequence of the vertices of the input such that the Hausdorff or Fr\'echet distance between the input and output polylines is at most $\eps$.

We analyze how the well-known Douglas-Peucker and Imai-Iri simplification
algorithms perform compared to the optimum possible, also in the situation where
the algorithms are given a considerably larger error threshold than $\eps$.
Furthermore, we show that computing an optimal simplification using the undirected Hausdorff distance is NP-hard.
The same holds when using the directed Hausdorff distance from the input to the output polyline, whereas the reverse can be computed in polynomial time.
Finally, to compute the optimal simplification from a polygonal line consisting of $n$ vertices under the Fr\'echet distance, we give an $O(kn^5)$ time algorithm that requires $O(kn^2)$ space, where $k$ is the output complexity of the simplification.
\end{abstract}

\section{Introduction}

Line simplification (a.k.a. polygonal approximation) is one of the oldest and best
studied applied topics in computational geometry.
It was and still is studied, for example, in the context of computer graphics (after image to vector conversion),
in Geographic Information Science, and in shape analysis.
Among the well-known algorithms, the ones by Douglas and Peucker~\cite{douglas73algorithms} and
by Imai and Iri~\cite{imai88algorithms} hold a special place and are frequently implemented and cited.
Both algorithms start with a polygonal line (henceforth {\em polyline}) as the input, specified by a sequence of
points $\langle p_1,\ldots,p_n \rangle$, and compute a subsequence starting with $p_1$ and ending with $p_n$,
representing a new, simplified polyline. Both algorithms take a constant $\eps>0$ and
guarantee that the output is within $\eps$ from the input.

The Douglas-Peucker algorithm~\cite{douglas73algorithms} is a simple and effective recursive procedure that
keeps on adding vertices from the input polyline until the computed polyline lies within a
prespecified distance $\eps$. The procedure is a heuristic in several ways: it does
not minimize the number of vertices in the output (although it performs well in practice)
and it runs in $O(n^2)$ time in the worst case (although in practice it appears more
like $O(n\log n)$ time). Hershberger and Snoeyink~\cite{hershberger94implementation} overcame the worst-case
running time bound by providing a worst-case $O(n\log n)$ time algorithm using techniques from
computational geometry, in particular a type of dynamic convex hull.

The Imai-Iri algorithm~\cite{imai88algorithms} takes a different approach. It computes for
every \emph{link} $\oli{p_ip_j}$ with $i<j$ whether the sequence of vertices $\langle p_{i+1},\ldots,p_{j-1} \rangle$
that lie in between in the input lie within distance $\eps$ to the segment $\oli{p_ip_j}$.
In this case $\oli{p_ip_j}$ is a valid link that may be used in the output.
The graph $G$ that has all vertices $p_1,\ldots,p_n$ as nodes and all valid links
as edges can then be constructed, and a minimum link path from $p_1$ to $p_n$ represents
an optimal simplification. Brute-force, this algorithm runs in $O(n^3)$ time, but with
the implementation of Chan and Chin~\cite{chan96approx} or Melkman and O'Rourke~\cite{melkman88chain}
it can be done in $O(n^2)$ time.

There are many more results in line simplification.
Different error measures can be used~\cite{buzer07optimal},
self-intersections may be avoided~\cite{berg98correct},
line simplification can be studied in the streaming model~\cite{abam10streaming},
it can be studied for 3-dimensional polylines~\cite{barequet2002approx3}, 
angle constraints may be put on consecutive segments~\cite{chen05angle},
there are versions that do not output a subset of the input points but other
well-chosen points~\cite{guibas93minimum},
it can be incorporated in subdivision simplification~\cite{estkowski01simple,funke2017map,guibas93minimum},
and so on and so forth.
Some optimization versions are NP-hard~\cite{estkowski01simple,guibas93minimum}.
It is beyond the scope of this
paper to review the very extensive literature on line simplification.

Among the distance measures for two shapes that are used in computational geometry,
the \emph{Hausdorff distance} and the \emph{Fr\'echet distance} are probably the
most well-known. They are both \emph{bottleneck measures}, meaning that the distance is
typically determined by a small subset of the input like a single pair of points
(and the distances are not aggregated over the whole shapes). The Fr\'echet distance
is considered a better distance measure, but it is considerably more difficult to
compute because it requires us to optimize over all parametrizations of the two shapes.
The Hausdorff distance between two simple polylines with $n$ and $m$ vertices can be
computed in $O((n+m)\log(n+m))$ time~\cite{alt95polygonal}. Their Fr\'echet distance can be
computed in $O(nm\log(n+m))$ time~\cite{alt95frechet}.

Now, the Imai-Iri algorithm is considered an optimal line simplification algorithm,
because it minimizes the number of vertices in the output, given the restriction that
the output must be a subsequence of the input. But for what measure? It is not
optimal for the Hausdorff distance, because there are simple examples where a
simplification with fewer vertices can be given that still have Hausdorff distance at
most $\eps$ between input and output. This comes from the fact that the algorithm uses
the Hausdorff distance between a link $\oli{p_ip_j}$ and the sub-polyline
$\langle p_i,\ldots,p_j \rangle$. This is more local than the Hausdorff distance requires, and is
more a Fr\'echet-type of criterion. But the line simplification produced by
the Imai-Iri algorithm is also not optimal for the Fr\'echet distance.
In particular, the input and output do not necessarily lie within Fr\'echet distance $\eps$,
because links are evaluated on their Hausdorff distance only.
 
The latter issue could easily be remedied: to accept links,
we require the Fr\'echet distance between any link $\oli{p_ip_j}$ and the
sub-polyline $\langle p_i,\ldots,p_j \rangle$ to be at most $\eps$~\cite{agarwal05linear,godau91metric}.
This guarantees that the Fr\'echet distance between the input and the output is at
most $\eps$. However, it does not yield the optimal simplification
within Fr\'echet distance $\eps$.
Because of the nature of the Imai-Iri algorithm, \emph{it requires us to match a vertex $p_i$ in the
input to the vertex $p_i$ in the output in the parametrizations, if $p_i$ is used in the output}.
This restriction on the parametrizations considered limits the simplification in unnecessary
ways. Agarwal et al.\ \cite{agarwal05linear} refer to a simplification that uses the normal
(unrestricted) Fr\'echet distance with error threshold $\eps$ as a
\emph{weak $\eps$-simplification under the Fr\'echet distance.}\footnote{Weak refers
to the situation that the vertices of the simplification can lie anywhere.}
They show that the Imai-Iri algorithm using the Fr\'echet distance gives a simplification
with no more vertices than an optimal weak $(\eps/4)$-simplification under the Fr\'echet
distance, where the latter need not use the input vertices.

The discussion begs the following questions: How much worse do the known
algorithms and their variations perform in theory, when compared to the
optimal Hausdorff and Fr\'echet simplifications? What if the
optimal Hausdorff and Fr\'echet simplifications use a smaller value than $\eps$?
As mentioned, Agarwal et al.\ \cite{agarwal05linear} give a partial answer.
How efficiently can the optimal Hausdorff simplification and the optimal Fr\'echet
simplification be computed (when using the input vertices)?

\subparagraph{Organization and results.}
In Section \ref{sec:prelim} we explain the Douglas-Peucker algorithm
and its Fr\'echet variation; the Imai-Iri algorithm has been explained
already. We also show with a small example that
the optimal Hausdorff simplification has fewer vertices than the Douglas-Peucker
output and the Imai-Iri output, and that the same holds true for the optimal
Fr\'echet simplification with respect to the Fr\'echet variants.

In Section \ref{sec:approx} we will analyze the four algorithms and their performance with respect to
an optimal Hausdorff simplification or an optimal Fr\'echet simplification more extensively.
In particular, we address the question how many more vertices the four algorithms need,
and whether this remains the case when we use a larger value of $\eps$ but still
compare to the optimization algorithms that use $\eps$.

In Section \ref{sec:hardness} we consider both the directed and undirected Hausdorff distance to compute the optimal simplification.
We show that only the simplification under the directed Hausdorff distance from the output to the input polyline
can be computed in polynomial time, while the rest is NP-hard to compute.
In Section \ref{sec:optalg} we show that the problem can be solved in polynomial time for the
Fr\'echet distance.

\begin{table}[tb]
\caption{Algorithmic results.}
\label{tab:algorithms}
\begin{tabular}{|l|l|l|l|}
\hline
& Douglas-Peucker & Imai-Iri & Optimal \\
\hline
Hausdorff distance & $O(n\log n)$~~~\cite{hershberger94implementation} & $O(n^2)$~~~\cite{chan96approx} & NP-hard~~{\it (this paper)} \\
Fr\'echet distance & $O(n^2)$~~~{\it (easy)} & $O(n^3)$~~~\cite{godau91metric} & $O(kn^5)$~~~{\it (this paper)} \\
\hline
\end{tabular}
\end{table}

\section{Preliminaries}
\label{sec:prelim}

The line simplification problem takes a maximum allowed error $\eps$ and a polyline $P$
defined by a sequence of points
$\langle p_1,\ldots,p_n \rangle$, and computes a polyline $Q$ defined by $\langle q_1,\ldots,q_k \rangle$
and the error is at most $\eps$. Commonly the sequence of points defining $Q$ is
a subsequence of points defining $P$, and furthermore, $q_1=p_1$ and $q_k=p_n$.
There are many ways to measure the distance or error of a simplification.
The most common measure is a distance, denoted by $\eps$, like the Hausdorff distance
or the Fr\'echet distance (we assume these distance measures are known). Note that
the Fr\'echet distance is symmetric, whereas the Hausdorff distance has a symmetric and
an asymmmetric version (the distance from the input to the simplification).

The Douglas-Peucker algorithm for polyline simplification is a simple recursive procedure
that works as follows. Let the line segment $\oli{p_1p_n}$ be the first simplification.
If all points of $P$ lie within distance $\eps$ from this line segment, then we have found
our simplification. Otherwise, let $p_f$ be the furthest point from $\oli{p_1p_n}$,
add it to the simplification, and recursively simplify the polylines $\langle p_1,\ldots,p_f \rangle$ and
$\langle p_f,\ldots,p_n \rangle$. Then merge their simplifications (remove the duplicate $p_f$).
It is easy to see that the algorithm runs in $O(n^2)$ time, and also that one can
expect a much better performance in practice. It is also straightforward to verify that
polyline $P$ has Hausdorff distance (symmetric and asymmetric) at most $\eps$ to the output.
We denote this simplification by $\DPH(P,\eps)$, and will leave out the
arguments $P$ and/or $\eps$ if they are understood.

We can modify the algorithm to guarantee a Fr\'echet distance between $P$ and its
simplification of at most $\eps$ by testing whether the Fr\'echet distance between $P$
and its simplification is at most $\eps$. If not, we still choose the most distant point $p_f$
to be added to the simplification (other choices are possible). This modification does not
change the efficiency of the Douglas-Peucker algorithm asymptotically as the
Fr\'echet distance between a line segment and a polyline can be determined in linear time.
We denote this simplification by $\DPF(P,\eps)$.

We have already described the Imai-Iri algorithm in the previous section. We refer to
the resulting simplification as $\IIH(P,\eps)$. It has a Hausdorff distance
(symmetric and asymmetric) of at most $\eps$ and never has more vertices than $\DPH(P,\eps)$.
Similar to the Douglas-Peucker algorithm, the Imai-Iri algorithm can be modified for the
Fr\'echet distance, leading to a simplification denoted by $\IIF(P,\eps)$.

We will denote the optimal simplification using the Hausdorff distance by $\OPTH(P,\eps)$,
and the optimal simplification using the Fr\'echet distance by $\OPTF(P,\eps)$. In the case
of Hausdorff distance, we require $P$ to be within $\eps$ of its simplification, so we use the
directed Hausdorff distance.

The example in Figure~\ref{f:noptHaus} shows that $\DPH(P)$ and $\IIH(P)$---which are both equal to $P$ itself---may
use more vertices than $\OPTH(P) = \langle p_1,p_5,p_6,p_7 \rangle$.
Similarly, the example in Figure~\ref{f:noptFrech} shows that $\DPF$ and $\IIF$ may use more
vertices than $\OPTF$.

\begin{figure}[tb]
	\centering
	\includegraphics[scale=1]{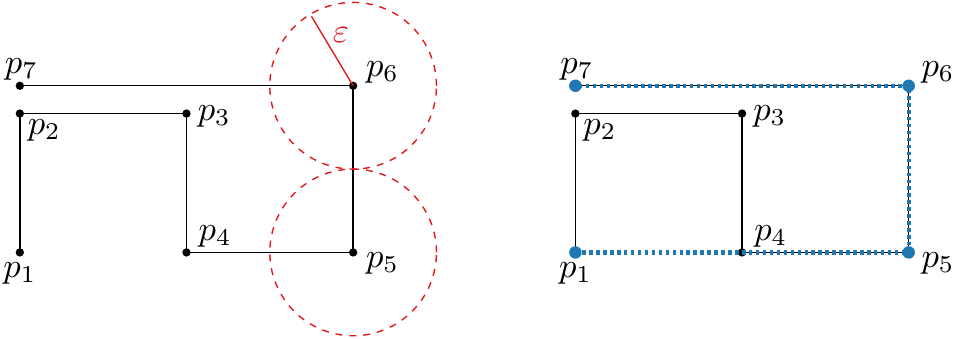}
	\caption{Simplifications $\IIH$ (same as input, left) and $\OPTH$ (in blue, right) for an example.}
	\label{f:noptHaus}
\end{figure}

\begin{figure}[tb]
	\centering
	\includegraphics[scale=1]{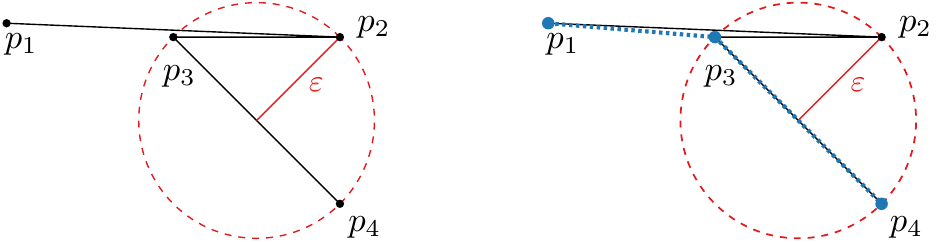}
	\caption{Simplifications $\IIF$ (same as input, left) and $\OPTF$ (in blue, right) for an example.}
	\label{f:noptFrech}
\end{figure}

\section{Approximation quality of Douglas-Peucker and Imai-Iri simplification}
\label{sec:approx}

The examples of the previous section not only show that $\IIH$ and $\IIF$
(and $\DPH$ and $\DPF$) use more vertices than $\OPTH$ and $\OPTF$, respectively,
they show that this is still the case if we
run $\II$ with a larger value than $\eps$. To let $\IIH$ use as few vertices as $\OPTH$,
we must use $2\eps$ instead of $\eps$ when the example is stretched horizontally.
For the Fr\'echet distance, the enlargement factor needed in the example approaches
$\sqrt{2}$ if we put $p_1$ far to the left.
In this section we analyze how the approximation enlargement factor relates to the
number of vertices in the Douglas-Peucker and Imai-Iri simplifications and the optimal ones.
The interest in such results stems from the fact that the Douglas-Peucker and
Imai-Iri algorithms are considerably more efficient than the computation of $\OPTH$ and $\OPTF$.

\subsection{Hausdorff distance}

To show that $\IIH$ (and $\DPH$ by consequence) may use many more vertices than $\OPTH$,
even if we enlarge $\eps$, we give a construction where this occurs. Imagine three
regions with diameter $\eps$ at the vertices of a sufficiently large equilateral
triangle. We construct a polyline $P$ where $p_1,p_5,p_9,\ldots$ are in one region,
$p_2,p_4,p_6,\ldots$ are in the second region, and the remaining vertices are in
the third region, see Figure~\ref{f:epsDPHaus}. Let $n$ be such that $p_n$ is
in the third region. An optimal simplification is $\langle p_1,p_i,p_n \rangle$ where $i$ is
any even number between $1$ and $n$. 
Since the only valid links are the ones connecting two consecutive vertices of $P$, \IIH is $P$ itself.
If the triangle is large enough with respect to $\eps$, this remains true even if
we give the Imai-Iri algorithm a much larger error threshold than~$\eps$.


\begin{theorem}
	For any $c>1$, there exists a polyline $P$ with $n$ vertices and an $\eps>0$ such that
	$\IIH(P,c\eps)$ has $n$ vertices and $\OPTH(P,\eps)$ has $3$ vertices.
\end{theorem}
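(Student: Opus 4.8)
The plan is to turn the informal construction above into precise coordinates and then verify the two counts separately. I place disks $D_A, D_B, D_C$ of diameter $\eps$ centered at the vertices $v_A, v_B, v_C$ of an equilateral triangle of side length $L$, fixing $L$ as a function of $c$ and $\eps$ only at the end. The polyline $P=\langle p_1,\dots,p_n\rangle$ is chosen so that $p_m\in D_A$ when $m\equiv 1$, $p_m\in D_C$ when $m\equiv 3$, and $p_m\in D_B$ when $m$ is even (residues taken modulo $4$), with $n\equiv 3$ so that $p_n\in D_C$. The only geometric fact I rely on is elementary: if the endpoints of one segment lie within $\eps$ of the respective endpoints of another, then each lies within (directed) Hausdorff distance $\eps$ of the other, as one sees by comparing the two points carrying the same parameter value.

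First I would establish $\OPTH(P,\eps)=3$. For the upper bound take $Q=\langle p_1,p_i,p_n\rangle$ with $i$ even, so $Q$ is the concatenation of $\oli{p_1p_i}$ (joining $D_A$ to $D_B$) and $\oli{p_ip_n}$ (joining $D_B$ to $D_C$). By the $4$-periodic assignment every edge of $P$ runs between $D_A$ and $D_B$ or between $D_B$ and $D_C$, so by the elementary fact each such edge lies within $\eps$ of $\oli{p_1p_i}$ or of $\oli{p_ip_n}$; hence the directed Hausdorff distance from $P$ to $Q$ is at most $\eps$. For the lower bound, the subsequence constraint forces a two-vertex simplification to be $\oli{p_1p_n}$, a segment close to the side $\oli{v_Av_C}$; the vertices in $D_B$ sit at distance roughly the triangle height $(\sqrt3/2)\,L$ from it, which exceeds $\eps$ once $L$ is large, so two vertices never suffice.

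Next I would show that at threshold $c\eps$ the Imai-Iri criterion accepts only the links between consecutive vertices, whence $\IIH(P,c\eps)=P$ has $n$ vertices. The crux is a combinatorial claim about the region pattern $ABCBABCB\cdots$: for every link $\oli{p_ip_j}$ with $j\ge i+2$ there is an index $m$ with $i<m<j$ whose region differs from the regions of both $p_i$ and $p_j$. I would prove this by a short case distinction on the pair of endpoint regions (equivalently on $i,j$ modulo $4$), noting that already $p_{i+2}$ lands in the third region in each case. Granting the claim, $\oli{p_ip_j}$ stays within $\eps$ of the side $\oli{v_{R_i}v_{R_j}}$---or of a single triangle vertex if $R_i=R_j$---while $p_m$ lies within $\eps$ of the opposite triangle vertex, so $p_m$ is at distance at least $(\sqrt3/2)\,L-2\eps$ from $\oli{p_ip_j}$.

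It then remains to fix $L$. Choosing $L$ large enough that $(\sqrt3/2)\,L-2\eps>c\eps$ makes every non-consecutive link invalid even at error threshold $c\eps$, forcing the Imai-Iri path through all of $p_1,\dots,p_n$, and the same bound keeps the two-vertex simplification too far for $\OPTH$. I expect the combinatorial claim to be the main obstacle, together with the reminder in the first count that one must bound the whole polyline $P$---all of its edges, not merely its vertices---against $Q$. This is exactly where the $4$-periodic assignment earns its keep: placing every even vertex in $D_B$ makes all edges incident to $D_B$, so the two segments of $Q$ cover them, while still guaranteeing a third-region vertex strictly inside every link of length at least two.
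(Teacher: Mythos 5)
This is the paper's construction exactly --- three diameter-$\eps$ regions at the corners of a large equilateral triangle with the period-$4$ region pattern $A,B,C,B,A,B,C,B,\ldots$ --- and your two verifications (the three-vertex simplification covers every edge of $P$ because every edge is incident to $B$, and no long link survives threshold $c\eps$ because some interior vertex sits near the opposite corner) fill in details the paper leaves implicit; the argument is sound. One correction to your combinatorial claim, though: $p_{i+2}$ does \emph{not} land in the third region in every case (for $i$ even, $p_{i+2}$ lies in $B$ together with $p_i$, and when $j=i+2$ it is not an interior vertex at all), so the witness must be $p_{i+1}$ when $j=i+2$ and otherwise either $p_{i+1}$ or $p_{i+2}$ depending on $i \bmod 4$ (with any interior vertex sufficing once $j\ge i+4$); the case distinction you propose does produce a valid witness in every case, so nothing breaks.
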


\begin{figure}[tb]
	\centering
	\includegraphics[scale=1]{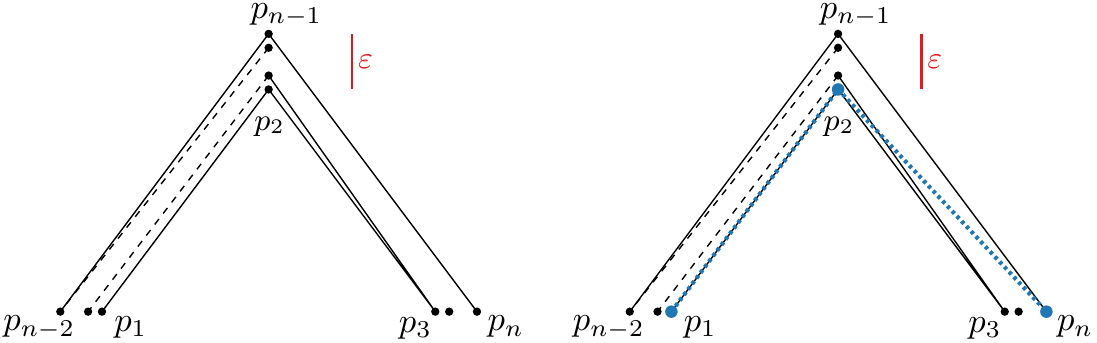}
	\caption{The Douglas-Peucker and Imai-Iri algorithms may not be able to simplify at all, whereas the optimal
	simplification using the Hausdorff distance has just three vertices (in blue,~right).
	}
	\label{f:epsDPHaus}. 
\end{figure}

Note that the example applies both to the directed and the undirected Hausdorff distance.

\subsection{\frechet distance}

Our results are somewhat different for the Fr\'echet distance; we need to make a distinction between
$\DPF$ and $\IIF$.

\subparagraph{Douglas-Peucker}

We construct an example that shows that $\DPF$ may have many more vertices than $\OPTF$,
even if we enlarge the error threshold. It is illustrated in Figure~\ref{f:epsDPFrech}.
Vertex $p_2$ is placed slightly higher than $p_4, p_6, \ldots$ so that it will be added
first by the Fr\'echet version of the Douglas-Peucker algorithm. Eventually all
vertices will be chosen. $\OPTF$ has only four vertices. Since the zigzag $p_{n-3},\ldots,p_n$
can be arbitrarily much larger than the height of the vertical zigzag $p_1,\dots,p_{n-4}$,
the situation remains if we make the error threshold arbitrarily much larger.

\begin{figure}[tb]
	\centering
	\includegraphics[scale=1]{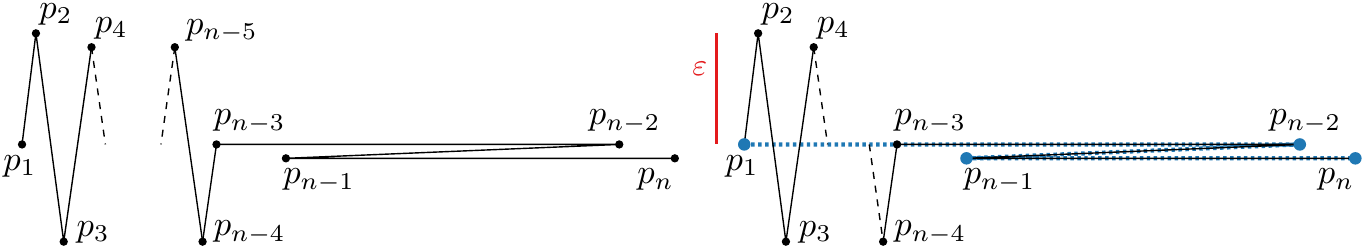}
	\caption{Left: a polyline on which the Fr\'echet version of the Douglas-Peucker algorithm performs poorly and the output polyline contains $n$ vertices.
	Right: the optimal simplification contains four vertices (in blue).}
	\label{f:epsDPFrech}
\end{figure}


\begin{theorem}
For any $c>1$, there exists a polyline $P$ with $n$ vertices and an $\eps>0$ such that
$\DPF(P,c\eps)$ has $n$ vertices and $\OPTF(P,\eps)$ has $4$ vertices.
\end{theorem}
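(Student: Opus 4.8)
The plan is to turn the sketch around Figure~\ref{f:epsDPFrech} into a precise construction and then bound \DPF and \OPTF separately, using a single scale parameter $L$ to absorb the enlargement factor~$c$. I would realise the ``vertical zigzag'' $\langle p_1,\dots,p_{n-4}\rangle$ as a thin zigzag that is monotone along its own axis, with perpendicular amplitude at most $\eps$ and with \emph{decaying} amplitudes, $p_2$ sticking out the most (this is the ``$p_2$ slightly higher'' of the figure). The ``large zigzag'' $\langle p_{n-3},\dots,p_n\rangle$ I would make a long, perpendicularly thin back-and-forth of longitudinal extent $L$ with three properties: (i) its \frechet distance to any chord $\oli{p_kp_n}$ with $k\le n-4$ is $\Theta(L)$, because of the backtracking; (ii) its Euclidean distance to those \emph{segments} stays below the amplitude of the leading tooth, because it hugs them; and (iii) it can be replaced by two segments within \frechet distance $\eps$, independently of $L$. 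The parameter $L$ is the knob that will defeat any fixed $c$.

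For the \DPF bound I would track the recursion through two invariants. First, every subproblem $\langle p_k,\dots,p_n\rangle$ with $k\le n-4$ contains the whole large zigzag, so by~(i) its \frechet distance to $\oli{p_kp_n}$ is $\Theta(L)$; choosing $L$ large enough (as a function of $c$) makes this exceed $c\eps$, so the \frechet test fails and \DPF must split. Second, by~(ii) no vertex of the large zigzag is ever the point farthest from the current segment, while the decaying amplitudes make the \emph{leading} interior vertex farthest in every suffix; hence the split point of $\langle p_k,\dots,p_n\rangle$ is its immediate successor $p_{k+1}$. Consequently \DPF peels off one vertex at a time, each left residue $\langle p_k,p_{k+1}\rangle$ being a single edge, and is driven all the way down; together with the $O(1)$ genuine vertices of the large zigzag this forces $\DPF(P,c\eps)$ to output all $n$ vertices.

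For \OPTF I would prove matching bounds. For the upper bound I exhibit the four-vertex simplification $\langle p_1,A,B,p_n\rangle$, where $A$ is the last vertex of the vertical zigzag and $B$ the far turn of the large zigzag, and I give an explicit width-$\eps$ matching: the single segment $\oli{p_1A}$ sweeps the monotone thin zigzag (perpendicular deviation $\le\eps$), and the two segments $\oli{AB},\oli{Bp_n}$ follow the out- and back-stroke of the large zigzag, which by~(iii) stay within $\eps$ regardless of $L$. This is where the example bites: the true \frechet distance lets the correspondence slide, so the whole zigzag is absorbed by one segment, whereas the endpoint-pinned criterion of \DPF (and of \IIF) cannot do this, exactly as discussed in the introduction. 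For the lower bound I would show that three vertices cannot suffice---a single segment cannot fit the back-and-forth, whose \frechet cost is $\Theta(L)$, and the one remaining segment cannot simultaneously absorb the zigzag once its axis is chosen non-parallel to the out-stroke---so $\OPTF(P,\eps)$ has exactly four vertices. Since this bound is independent of $L$, the two halves hold for every $c>1$.

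The hard part will be the second invariant of the \DPF analysis: verifying, uniformly as $L\to\infty$, that the large zigzag never overtakes the leading tooth as the farthest point and that the farthest point is always the immediate successor. I would control this by keeping the vertical zigzag small and placing $p_n$ far away, so that all the chords $\oli{p_kp_n}$ are nearly parallel and each vertex's distance to its segment is essentially its intrinsic perpendicular offset; the decaying amplitudes then make $p_{k+1}$ win in every suffix, and property~(ii) keeps the large zigzag out of contention. Once these geometric facts are pinned down, the \frechet test failures of the first invariant, the width-$\eps$ matching for the upper bound, and the three-vertex lower bound are all routine estimates.
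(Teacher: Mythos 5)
Your construction and analysis are essentially the paper's own: the paper proves this theorem with exactly the Figure~\ref{f:epsDPFrech} example you elaborate---a thin zigzag whose leading tooth ($p_2$) is always the farthest point, followed by a back-and-forth of length $L$ that keeps the \frechet test failing for any $c\eps$ while being absorbable by $\OPTF$'s last two links. The paper only sketches these facts; your proposal fills in the same argument with the recursion invariants and the explicit width-$\eps$ matching, so it is the same route, just more detailed.
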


\subparagraph{Remark}
One could argue that the choice of adding the furthest vertex is not suitable when
using the Fr\'echet distance, because we may not be adding the vertex (or vertices)
that are to ``blame'' for the high Fr\'echet distance. However, finding the vertex
that improves the Fr\'echet distance most is computationally expensive, defeating
the purpose of this simple algorithm. Furthermore, we can observe that also in the Hausdorff
version, the Douglas-Peucker algorithm does not choose the vertex that improves the
Hausdorff distance most (it may even increase when adding an extra vertex).

\subparagraph{Imai-Iri}

Finally we compare the Fr\'echet version of the Imai-Iri algorithm to the optimal
Fr\'echet distance simplification. Our main construction has ten vertices placed in
such a way that $\IIF$ has all ten vertices, while $\OPTF$ has only eight of them,
see Figures~\ref{f:epsIIFrech1} and~\ref{f:epsIIFrech2}.

It is easy to see that under the \frechet distance, \IIF = \OPTF for the previous construction in Figure~\ref{f:epsDPFrech}.
We give another input polyline $P$ in Figure~\ref{f:epsIIFrech2} to show that \IIF also does not approximate \OPTF even if \IIF is allowed to use $\varepsilon$ that is larger by a constant factor.

\begin{figure}[tb]
	\centering
	\includegraphics[scale=1]{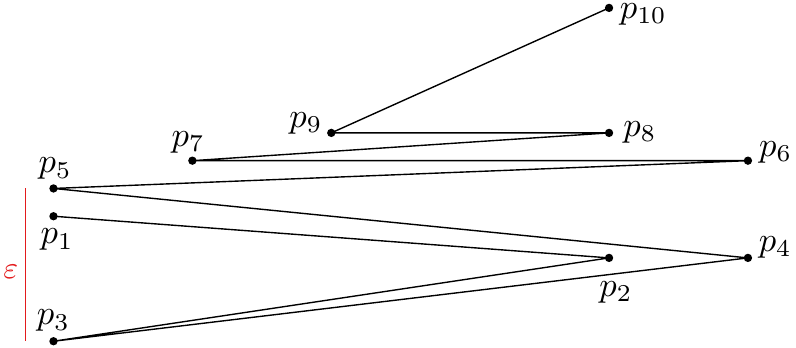}
	\caption{The Imai-Iri simplification will have all vertices because the only valid links with a Fr\'echet distance at most $\eps$ are the ones connecting two consecutive vertices in the polyline.}
	\label{f:epsIIFrech1}
\end{figure}

\begin{figure}[tb]
	\centering
	\includegraphics[scale=1]{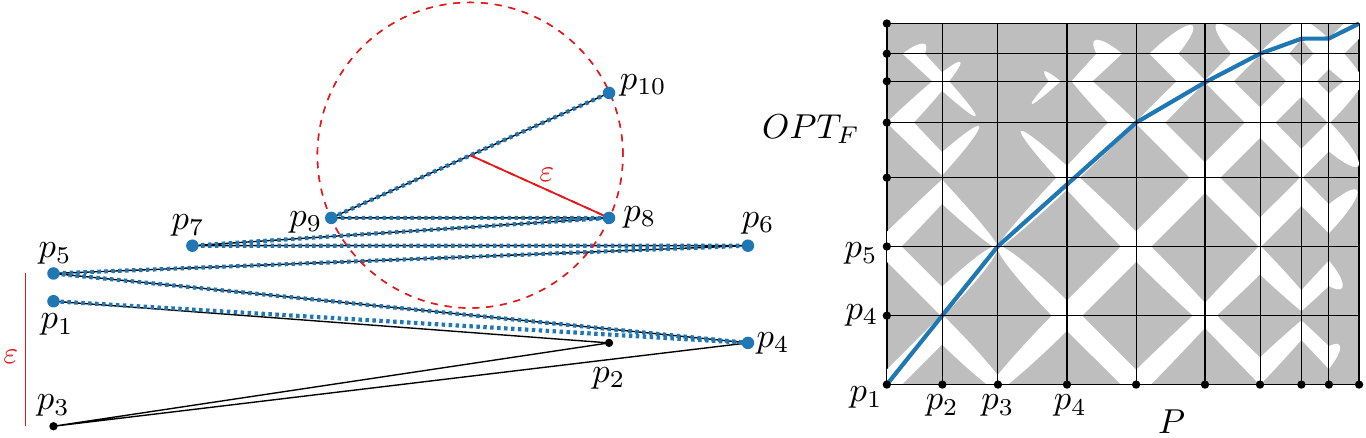}
	\caption{The optimal simplification can skip $p_2$ and $p_3$; in the parametrizations
	witnessing the Fr\'echet distance, $\OPTF$ ``stays two vertices behind'' on the input
	until the end. Right, the free space diagram of $P$ and $\OPTF$.}
	\label{f:epsIIFrech2}
\end{figure}

We can append multiple copies of this construction together with a suitable connection in between.
This way we obtain:

\begin{theorem}
There exist constants $c_1>1$, $c_2>1$, a polyline $P$ with $n$ vertices, and an $\eps>0$
such that $|\IIF(P,c_1\eps)|> c_2|\OPTF(P,\eps)|$.
\end{theorem}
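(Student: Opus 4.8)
The plan is to take the ten-vertex gadget $G$ of Figure~\ref{f:epsIIFrech2} as the basic building block and chain $m$ translated copies of it into one long polyline $P_m$, so that the per-gadget advantage of $\OPTF$ over $\IIF$ accumulates into a constant-factor gap that persists for arbitrarily large $n$. For the single gadget we already have the two facts we need: (i) with the enlarged threshold $c_1\eps$ every valid \frechet link still spans only two consecutive vertices, so $\IIF(G,c_1\eps)$ keeps all ten vertices; and (ii) $\OPTF(G,\eps)$ uses only eight vertices, skipping $p_2$ and $p_3$ by means of a parametrization that ``stays two vertices behind'' on the input and only catches up at the end of the gadget. I treat these as given by the construction of the preceding figures.

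First I would fix the enlargement constant $c_1>1$ to be the one for which fact (i) holds, chosen so that the geometry of $G$ leaves a safety margin: even at radius $c_1\eps$, no link $\oli{p_ip_j}$ with $j>i+1$ has \frechet distance at most $c_1\eps$ to the subpolyline $\langle p_i,\ldots,p_j \rangle$, so the only valid links remain the edges. Next I would design the connection between consecutive copies. The essential requirement is \emph{locality}: the ``two vertices behind'' shift that $\OPTF$ exploits inside one copy must be reset before the next copy begins, so that the savings in different copies are independent and additive, and so that no valid $\IIF$ link at threshold $c_1\eps$ can bridge the seam between two copies. Concretely I would insert a connecting segment long enough, relative to $c_1\eps$, that in any parametrization realizing \frechet distance at most $\eps$ between $P_m$ and a candidate simplification, the connector must be traversed simultaneously on both curves; this pins the parametrization at each seam and decouples the gadgets. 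Verifying this amounts to checking the free-space diagram of $P_m$ against the candidate optimum and confirming that it is the concatenation of $m$ copies of the single-gadget free-space diagram joined at forced pass-through cells.

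With locality established, the counting is routine. Each copy forces $\IIF(P_m,c_1\eps)$ to keep its full set of vertices, while $\OPTF(P_m,\eps)$ omits two vertices per copy, so $|\IIF(P_m,c_1\eps)| = 10m + O(1)$ and $|\OPTF(P_m,\eps)| = 8m + O(1)$, where the $O(1)$ terms account for shared seam vertices and the two global endpoints. Hence the ratio tends to $10/8 = 5/4$ as $m\to\infty$. I would therefore fix any constant $c_2$ with $1<c_2<5/4$, keep the $c_1$ above, and take $m$ large enough (equivalently $n=10m+O(1)$ large enough) that the $O(1)$ corrections are absorbed and $|\IIF(P_m,c_1\eps)| > c_2\,|\OPTF(P_m,\eps)|$, which is the claimed inequality.

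The main obstacle is the locality argument for the connector: showing simultaneously that the enlarged threshold $c_1\eps$ does not let $\IIF$ shortcut across a seam, and that $\eps$ still suffices for $\OPTF$ to save two vertices per copy with the lagging shifts confined to individual copies. Both directions hinge on the same quantitative gap between the intra-gadget distances (which stay within $\eps$ under the lagging parametrization) and the seam distances (which must exceed $c_1\eps$ for any vertex-skipping link), so the construction must be dimensioned to separate these two scales. Once the connector length is chosen to enforce a forced pass-through cell in the free-space diagram, the decoupling and the additive vertex count follow.
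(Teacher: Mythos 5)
Your proposal follows the same route as the paper, which proves the theorem in a single sentence by appending multiple copies of the ten-vertex gadget (where $\IIF$ keeps all ten vertices even at threshold $c_1\eps$ while $\OPTF$ needs only eight) with a suitable connection in between, yielding a ratio tending to $5/4$. Your elaboration of what the connector must guarantee (decoupling the per-gadget savings and preventing $\IIF$ links from bridging seams) is a correct filling-in of the detail the paper leaves implicit.
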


By the aforementioned result of Agarwal et al.~\cite{agarwal05linear}, we know that the theorem
is not true for $c_1\geq 4$.

\section{Algorithmic complexity of the Hausdorff distance}
\label{sec:hardness}

The results in the previous section show that both the Douglas-Peucker and the Imai-Iri algorithm do not produce an optimal polyline that minimizes the Hausdorff or Fr\'echet distance, or even approximate them within any constant factor.
Naturally, this leads us to the following question: Is it possible to compute the optimal Hausdorff or \frechet simplification in polynomial time?


In this section, we present a construction which proves that under the Hausdorff distance, computing the optimal simplified polyline is NP-hard.

\subsection {Undirected Hausdorff distance}

We first consider the undirected (or bidirectional) Hausdorff distance; that is, we require both the maximum distance from the initial polyline $P$ to the simplified polyline $Q$ and the maximum distance from $Q$ to $P$ to be at most $\eps$.

\begin{theorem}
	Given a polyline $P = \langle p_1, p_2, \ldots, p_n \rangle$ and a value $\eps$, the problem of computing a minimum length subsequence $Q$ of $P$ such that the undirected Hausdorff distance between $P$ and $Q$ is at most $\eps$ is NP-hard.
\end{theorem}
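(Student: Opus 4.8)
The plan is to prove NP-hardness by reduction from a known NP-complete problem. Given that the undirected Hausdorff distance requires both $P$ within $\eps$ of $Q$ and $Q$ within $\eps$ of $P$, the second condition is the interesting one: every vertex we \emph{keep} in the simplification, as well as every point on the segments connecting consecutive kept vertices, must be within $\eps$ of the \emph{original} polyline $P$. This backward constraint is what makes the problem combinatorially rich, because it forbids shortcut segments that stray too far from $P$, even into empty space. I would reduce from a covering-type or satisfiability-type problem; the most natural candidates are \textsc{Vertex Cover}, \textsc{Set Cover}, or \textsc{Planar 3-SAT} (the latter being attractive because the input is a geometric curve in the plane and planarity constraints are often easy to realize).

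First I would design a gadget-based construction. The forward condition (every $p_i$ within $\eps$ of $Q$) forces the kept subsequence to ``stay near'' clusters of input points, much as in the equilateral-triangle example of Section~\ref{sec:approx}; I would use tight clusters of diameter roughly $\eps$ so that skipping a cluster is only allowed when a nearby kept segment already passes within $\eps$ of it. The backward condition then restricts which shortcuts are legal. The idea is to arrange the polyline so that each ``choice'' in the combinatorial instance (e.g.\ selecting a variable's truth value, or including/excluding an element) corresponds to keeping one of two alternative vertices, and the minimum number of kept vertices encodes the optimum of the source problem. I would build variable gadgets that can be traversed cheaply in exactly two ways (encoding true/false), and clause or constraint gadgets that can be simplified cheaply only if at least one incident variable gadget was set the satisfying way; an unsatisfied clause would force an extra vertex, so the total count separates satisfiable from unsatisfiable instances by a threshold $k$.

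The key step is verifying both Hausdorff inequalities for every candidate shortcut the construction permits. For each proposed simplified edge $\oli{q_aq_b}$ I must check (i) that all skipped input vertices $p_{a+1},\dots,p_{b-1}$ lie within $\eps$ of the edge \emph{and} that the edge itself nowhere exceeds distance $\eps$ from $P$, and (ii) that no \emph{unintended} shortcut is accidentally legal, which would let the solver cheat the gadget. Controlling unintended shortcuts is typically the crux: I would place ``blocker'' points or route the polyline through detours so that any segment skipping across gadget boundaries passes through a region at distance greater than $\eps$ from $P$, violating the backward condition. This geometric separation argument, carried out uniformly across all gadget interfaces, is the part I expect to require the most care.

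The main obstacle, therefore, is the interaction of the two directed Hausdorff constraints under the freedom of choosing an arbitrary subsequence: unlike the Fr\'echet setting, the Hausdorff distance imposes no ordering between matched points, so a single kept segment can simultaneously ``cover'' input vertices that are far apart along $P$, and conversely an input vertex can be covered by a distant part of $Q$. I must ensure the gadgets are robust against such long-range coverage, so that the only cheap simplifications are the intended local ones. Once the gadget geometry is fixed and these coverage interactions are ruled out by explicit distance estimates, the correctness of the reduction (polynomial size, and minimum $|Q|$ below threshold iff the source instance is a yes-instance) should follow in a routine manner, and I would conclude that computing $\OPTH$ under the undirected Hausdorff distance is NP-hard.
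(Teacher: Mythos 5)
Your proposal is a plan for a reduction, not a reduction: the load-bearing content---the actual gadgets---is never constructed, and you yourself flag the hardest part (``controlling unintended shortcuts \ldots is typically the crux'') as future work. That is precisely where a \textsc{Planar 3-SAT} or \textsc{Vertex Cover} approach is likely to get stuck under the \emph{undirected} Hausdorff distance. Two specific difficulties are left unresolved. First, $Q$ must be a subsequence of $P$ in the original order, so the ``two alternative traversals'' of a variable gadget and the later visit to a clause gadget must all be realizable as an order-respecting selection of vertices of a single curve $P$; since the Hausdorff distance ignores ordering, you cannot rely on the metric to enforce this, and you give no mechanism for it. Second, the forward condition forces $Q$ to come within $\eps$ of \emph{every} gadget no matter what, so a clause gadget must be coverable ``for free'' exactly when some incident variable was set favorably---but because a single kept segment can cover input points that are far apart along $P$, and because the backward condition only constrains $Q$ to lie near the \emph{union} of all gadgets (not near the locally relevant piece), the long-range cheating you worry about is not obviously excludable by ``blocker points.'' Without explicit gadgets and distance estimates, the claim that correctness ``should follow in a routine manner'' is not justified.

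For contrast, the paper avoids gadget design entirely: it reduces from \textsc{Hamiltonian Cycle} in segment intersection graphs (NP-complete via the planar case and the Chalopin--Gon\c{c}alves resolution of Scheinerman's conjecture). The input polyline $P$ is a walk that ``paints'' the entire arrangement $\cal A$ of the segments, repeated $3n+1$ times so that the subsequence-order constraint never blocks the intended solution; the forward condition then forces $Q$ to visit all $2n$ segment endpoints, the backward condition forces every edge of $Q$ to lie along a segment of $\cal A$, and a counting argument ($2n+1$ endpoints plus at least two interior entry/exit vertices per segment) shows that a $(3n+1)$-vertex simplification exists iff the intersection graph has a Hamiltonian cycle. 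Note how the ordering problem and the ``unintended shortcut'' problem are both dissolved structurally (by the repeated spanning walk and by confining $Q$ to $\cal A$) rather than by per-gadget distance estimates. If you want to salvage your approach, you would need to supply concrete variable and clause gadgets together with a uniform argument ruling out cross-gadget coverage and an explicit encoding of the traversal order; as it stands, the proposal does not establish the theorem.
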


We prove the theorem with a reduction from Hamiltonian cycle in segment intersection graphs.
It is well-known that Hamiltonian cycle is NP-complete in planar graphs~\cite {gjs-snp-74}, and
by Chalopin and Gon{\c c}alves' proof~\cite {cg-epgigsp-09} of Scheinerman's conjecture~\cite {s-icmipg-84} that the planar graphs are included in the segment intersections graphs
it follows that Hamiltonian cycle in segment intersections graphs is NP-complete.

Let $S$ be a set of $n$ line segments in the plane, and assume all intersections are proper (if not, extend the segments slightly).
Let $G$ be its intersection graph (i.e. $G$ has a vertex for every segment in $S$, and two vertices in $G$ are connected by an edge when their corresponding segments intersect).
We assume that $G$ is connected; otherwise, clearly there is no Hamiltonian cycle in $G$.

We first construct an initial polyline $P$ as follows. (Figure~\ref {f:undirNP} illustrates the construction.)
Let $\cal A$ be the arrangement of $S$, let $p$ be some endpoint of a segment in $S$, and let $\pi$ be any path on $\cal A$ that starts and finishes at $p$ and visits all vertices and edges of $\cal A$ (clearly, $\pi$ may reuse vertices and edges). Then $P$ is simply $3n+1$ copies of $\pi$ appended to each other. 
Consequently, the order of vertices in $Q$ now must follow the order of these copies.
We now set $\eps$ to a sufficiently small value.

Now, an output polyline $Q$ with Hausdorff distance at most $\eps$ to $P$ must also visit all vertices and edges of $\cal A$, and stay close to $\cal A$. If $\eps$ is sufficiently small, there will be no benefit for $Q$ to ever leave $\cal A$.

\begin{figure}[tb]
	\centering
	\includegraphics[scale=1]{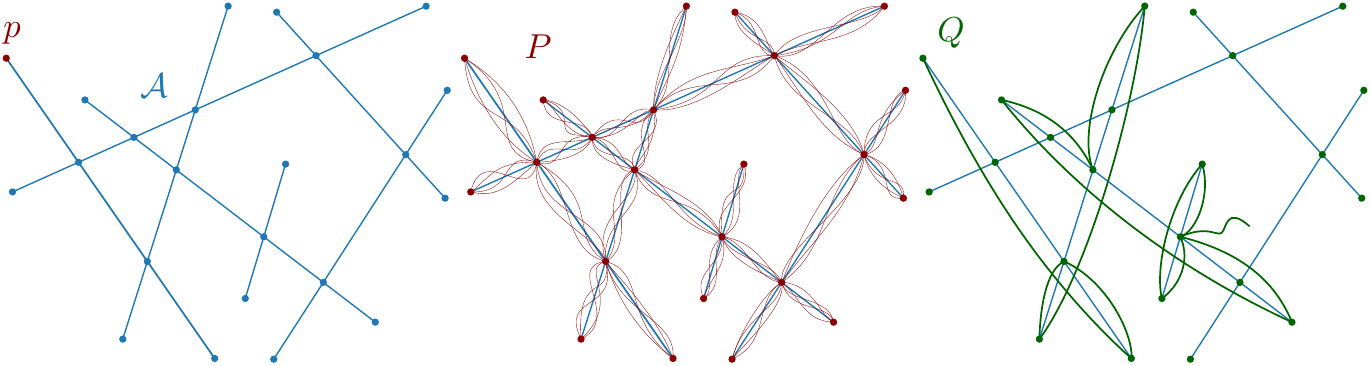}
	\caption{The construction: $\cal A$ is the arrangement of a set of segments $S$. We build an input path $P$ that ``paints'' over $S$ completely, and we are looking for an output path $Q$ that corresponds to a Hamiltonian cycle. In this case, there is no Hamiltonian cycle, and the path gets stuck.}
	\label{f:undirNP}
\end{figure}

\begin {lemma}
A solution $Q$ of length $3n+1$ exists if and only if $G$ admits a Hamiltonian~cycle.
\end {lemma}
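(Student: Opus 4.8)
The plan is to prove the biconditional in both directions, centered on the idea that a minimum-length subsequence $Q$ that covers the arrangement $\cal A$ under the Hausdorff constraint corresponds exactly to a closed walk on $G$ that uses each vertex (segment) the minimum number of times. First I would pin down what it means for $Q$ to ``stay on'' $\cal A$. Since $P$ consists of $3n+1$ copies of a path $\pi$ that traverses every vertex and edge of $\cal A$, and $\eps$ is chosen sufficiently small, the directed Hausdorff distance from $P$ to $Q$ forces $Q$ to pass within $\eps$ of every point of $\cal A$; in particular $Q$ must come within $\eps$ of every segment endpoint and every crossing vertex. The converse directed distance from $Q$ to $P$ forbids $Q$ from wandering away from $\cal A$. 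I would argue that for small enough $\eps$ the only way to simultaneously satisfy both constraints with a \emph{subsequence of $P$} is for the vertices of $Q$ to be (near-)vertices of $\cal A$ and for consecutive vertices of $Q$ to be joined by links that travel along edges of $\cal A$, so that $Q$ is effectively a closed walk in the arrangement graph.

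\medskip

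Next I would set up the counting argument that links the length $3n+1$ to Hamiltonicity. The key observation is that each of the $n$ segments of $S$ must be ``painted'' by $Q$ along its entire length, and because the segments cross, the cheapest way (in number of vertices) for a walk to cover a single segment end-to-end while also visiting its crossing points is tightly constrained. I would show that covering all of $\cal A$ forces $Q$ to traverse each segment, and that a walk in $G$ visiting all $n$ vertices and returning to its start needs at least $n+1$ graph-steps, with equality precisely when the walk is a Hamiltonian cycle; the factor relating graph-steps to the polyline length $3n+1$ comes from how many vertices of $Q$ are spent per segment traversal (endpoints plus the crossing used to switch segments). The careful bookkeeping here is where I would convert ``$Q$ has length exactly $3n+1$'' into ``the induced closed walk visits each segment exactly once,'' i.e.\ a Hamiltonian cycle.

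\medskip

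For the forward direction ($Q$ of length $3n+1$ exists $\Rightarrow$ Hamiltonian cycle), I would read off from such a minimum $Q$ the cyclic sequence of segments it traverses, verify using the covering requirement that every segment appears, and use the exact length count to conclude no segment is traversed twice, yielding a Hamiltonian cycle in $G$. For the reverse direction (Hamiltonian cycle $\Rightarrow$ $Q$ of length $3n+1$), given a Hamiltonian cycle $v_{i_1}, v_{i_2}, \ldots, v_{i_n}, v_{i_1}$ in $G$, I would explicitly construct $Q$ by walking along segment $s_{i_1}$, switching to $s_{i_2}$ at their crossing, and so on, choosing as the vertices of $Q$ those points of $P$ nearest the relevant endpoints and crossings; then I would check that this $Q$ is indeed a subsequence of $P$ (this is where the $3n+1$ copies of $\pi$ matter, giving enough ``room'' to realize the tour as a subsequence in the forced order) and that both directed Hausdorff distances are at most $\eps$.

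\medskip

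I expect the main obstacle to be the reverse-direction subsequence realizability together with the exact constant in the length bound. The Hausdorff constraints are comparatively soft, but showing that a Hamiltonian cycle can always be realized as a genuine \emph{subsequence} of $P$ with exactly $3n+1$ vertices requires arguing that the $3n+1$ copies of $\pi$ supply enough repetitions of each arrangement vertex in the correct cyclic order to splice together the desired tour, and that no fewer than $3n+1$ vertices can ever suffice. Getting this constant right---accounting for the endpoints of each of the $n$ segments and the switching crossings while ruling out cheaper non-Hamiltonian covers---is the delicate combinatorial core of the lemma.
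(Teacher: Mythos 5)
Your plan follows essentially the same route as the paper: a lower-bound count of $3n+1$ vertices whose tightness forces a Hamiltonian cycle, plus an explicit construction realizing a given Hamiltonian cycle as a subsequence $Q$ of length $3n+1$. The ``delicate combinatorial core'' you defer is resolved in the paper by a single observation: besides its two endpoints, each segment needs two internal vertices of $Q$ (one to enter it and one to leave it, which cannot happen at an endpoint since all intersections are proper), so $2n$ endpoints plus the repeated start plus at least $n$ switch vertices gives $3n+1$, with equality exactly when every leave-vertex coincides with the next segment's enter-vertex --- i.e., when the induced segment order is a Hamiltonian cycle (and note a closed walk through all $n$ vertices uses $n$ graph-steps, not $n+1$ as you wrote).
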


\begin {proof}
Clearly, any simplification $Q$ will need to visit the $2n$ endpoints of the segments in $S$, and since it starts and ends at the same point $p$, will need to have length at least $2n+1$.
Furthermore, $Q$ will need to have at least two internal vertices on every segment $s \in S$: once to enter the segment and once to leave it (note that we cannot enter or leave a segment at an endpoint since all intersections are proper intersections). This means the minimum number of vertices possible for $Q$ is $3n + 1$.

Now, if $G$ admits a Hamiltonian cycle, it is easy to construct a simplification with $3n+1$ vertices as follows. We start at $p$ and collect the other endpoint of the segment $s_1$ of which $p$ is an endpoint. Then we follow the Hamiltonian cycle to segment $s_2$; by definition $s_1s_2$ is an edge in $G$ so their corresponding segments intersect, and we use the intersection point to leave $s_1$ and enter $s_2$. We proceed in this fashion until we reach $s_n$, which intersects $s_1$, and finally return to $p$.

On the other hand, any solution with $3n + 1$ vertices must necessarily be of this form and therefore imply a Hamiltonian cycle: in order to have only $3$ vertices per segment the vertex at which we leave $s_1$ must coincide with the vertex at which we enter some other segment, which we call $s_2$, and we must continue until we visited all segments and return to $p$.
\end {proof}

\subsection {Directed Hausdorff distance: $P \to Q$}

We now shift our attention to the directed Hausdorff distance from $P$ to $Q$: we require the maximum distance from $P$ to $Q$ to be at most $\eps$, but $Q$ may have a larger distance to $P$. 
The previous reduction does not seem to work because there is always a Hamiltonian Cycle of length $2n$ for this measure.
Therefore, we prove the NP-hardness differently.

The idea is to reduce from {\scshape Covering Points By Lines}, which is known to be both NP-hard~\cite{megiddo82complexity} and APX-hard~\cite{kumar00hardness}: given a set $S$ of points in $\R^2$, find the minimum number of lines needed to cover the points.

Let $S=\{s_1, \ldots, s_n\}$ be an instance of the {\scshape Covering Points By Lines} problem.
We fix $\eps$ based on $S$ and present the construction of a polyline connecting a sequence of $m=\mbox{poly}(n)$ points: $P=\langle p_1,p_2,...,p_m\rangle$ such that for every $1 \le i \le n$, we have $s_i = p_j$ for some $1 \le j \le m$.
The idea is to force the simplification $Q$ to cover all points in $P$ except those in $S$, such that in order for the final simplification to cover all points, we only need to collect the points in $S$ using as few line segments as possible.
To this end, we will place a number of {\em forced points} $F \subset P$, where a point $f$ is {\em forced} whenever its distance to any line through any pair of points in $P$ is larger than $\eps$. Since $Q$ must be defined by a subset of points in $P$, we will never cover $f$ unless we choose $f$ to be a vertex of $Q$. Figure~\ref {f:optDPHard1a} shows this idea.
On the other hand, we need to place points that allow us to freely draw every line through two or more points in $S$. We create two point sets $L$ and $R$ to the left and right of $S$, such that for every line through two of more points in $S$, there are a point in $L$ and a point in $R$ on that line. Finally, we need to build additional scaffolding around the construction to connect and cover the points in $L$ and $R$. Figure~\ref {f:optDPHard1b} shows the idea.

We now treat the construction in detail, divided into three parts with different purposes: 
\begin{enumerate}[noitemsep,nolistsep]
	\item a sub-polyline that contains $S$;
	\item a sub-polyline that contains $L$ and $R$; and
	\item two disconnected sub-polylines which share the same purpose: to guarantee that all vertices in the previous sub-polyline are themselves covered by $Q$.
\end{enumerate}
\medskip

\subsubsection*{Part 1: Placing $S$}
First, we assume that every point in $S$ has a unique $x$-coordinate; if this is not the case, we rotate $S$ until it is.\footnote {Note that, by nature of the {\scshape Covering Points By Lines} problem, we cannot assume $S$ is in general position; however, a rotation for which all $x$-coordinates are unique always exists.}
We also assume that every line through at least two points of $S$ has a slope between $-1$ and $+1$; if this is not the case, we vertically scale $S$ until it is.
Now, we fix $\eps$ to be smaller than half the minimum difference between any two $x$-coordinates of points in $S$, and smaller than the distance from any line through two points in $S$ to any other point in $S$ not on the line.

We place $n+1$ forced points $f_1,f_2,...,f_n,f_{n+1}$ such that the $x$-coordinate of $f_i$ lies between the $x$-coordinates of $s_{i-1}$ and $s_i$ and the points lie alternatingly above and below $S$; 
we place them such that the distance of the line segment $\oli{f_if_{i+1}}$ to $s_i$ is $\frac{3}{2}\eps$ and the distance of $\oli{f_if_{i+1}}$ to $s_{i-1}$ is larger than $\eps$.
Next, we place two auxiliary points $t^+_i$ and $t^-_i$ on $\oli{f_if_{i+1}}$ such that the distance of each point to $s_i$ is $2\eps$; refer to Figure~\ref{f:optDPHard1a}.
Then let $\tau_1 = \langle f_1, t^+_1, s_1, t^-_1, f_2, t^-_2, s_2, t^+_2, f_3, \ldots, f_{n+1} \rangle$ be a polyline connecting all points in the construction; $\tau_1$ will be part of the input segment $P$.

The idea here is that all forced points must appear on $Q$, and if only the forced points appear on $Q$, everything in the construction will be covered {\em except} the points in $S$ (and some arbitrarily short stubs of edges connecting them to the auxiliary points).
Of course, we could choose to include more points in $\tau_1$ in $Q$ to collect some points of $S$ already.
However, this would cost an additional three vertices per collected point (note that using fewer than three, we would miss an auxiliary point instead), and in the remainder of the construction we will make sure that it is cheaper to collect the points in $S$ separately later.

\begin{figure}[tb]
	\centering
	\begin{minipage}{0.45\textwidth}
		\centering
		\includegraphics[scale=1,page=1]{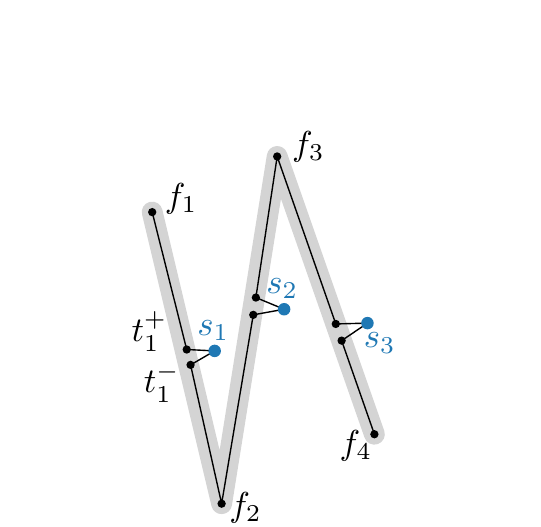}
		\caption{Example of $\tau_1$ where ${n=3}$. For a given $\eps$, the (simplified) polyline $f_1,f_2,f_3,f_4$ covers the gray area but not the blue points $s_1,s_2,s_3$.}
		\label{f:optDPHard1a}
	\end{minipage}\hfill
	\begin{minipage}{0.45\textwidth}
		\centering
		\includegraphics[page=2]{optDPHard1}
		\caption{Construction to allow the lines that can be used to cover the points of $S$.
		To ensure the order of vertices in $Q$, we create copies of $L$ and $R$. 
      Then, $Q$ can use them alternatingly.
		}
		\label{f:optDPHard1b}
	\end{minipage}
\end{figure}

\subsubsection*{Part 2: Placing and covering $L$ and $R$}

In the second part of the construction we create two sets of $O(n^2)$ vertices, $L$ and $R$,
which can be used to make links that cover $S$.
Consider the set $\Lambda$ of all $k \le \frac{n^2-n}{2}$ unique lines that pass through at least two points in $S$.
We create two sets of $k$ points $L = \{l_1, l_2, \ldots, l_k\}$ and $R = \{r_1, r_2, \ldots, r_k\}$ with the following properties:
\begin {itemize}[nolistsep]
\item the line through $l_i$ and $r_i$ is one of the $k$ lines in $\Lambda$,
\item the line through $l_i$ and $r_j$ for $i \ne j$ has distance more than $\eps$ to any point in $S$, and
\item the points in $L$ (resp. $R$) all lie on a common vertical line.
\end {itemize}
\medskip
Clearly, we can satisfy these properties by placing $L$ and $R$ sufficiently far from $S$.
We create a vertical polyline for each set, which consists of $k-1$ non-overlapping line segments that are connecting consecutive vertices in their $y$-order from top to bottom.
Let  $R_1$ and $L_1$ be such polylines containing $k$ vertices each.

Now, each line that covers a subset of $S$ can become part of $Q$ by selecting the correct pair of vertices from $R$ and $L$. However, if we want $Q$ to contain multiple such lines, this will not necessarily be possible anymore, since the order in which we visit $R_1$ and $L_1$ is fixed (and to create a line, we must skip all intermediate vertices). The solution is to make $h$ copies\footnote {The copies are in exactly the same location. If the reader does not like that and feels that points ought to be distinct, she may imagine shifting each copy by a sufficiently small distance (smaller than $\eps/h$) without impacting the construction.} $R_1, R_2, \ldots, R_h$ of $R_1$ and $h$ copies $L_1, L_2, \ldots, L_h$ of $L_1$ and visit them alternatingly.
Here $h = \lceil \frac{n}{2} \rceil$ is the maximum number of lines necessary to cover all points in $S$ in the {\scshape Covering Points By Lines} problem.

We create a polyline $\tau_2$ that contains $R_1$ and $L_1$ by connecting them with two new vertices $u^{r}_1$ and $u^{\ell}_1$.
Both $u^{r}_1$ and $u^{\ell}_1$ should be located far enough from $R_1$ and $L_1$ such that a link between $u^{r}_1$ and a vertex in $L_1$ (and $u^{\ell}_1$ with $R_1$) will not cover any point in~$S$.
To ensure that the construction ends at the last vertex in $L_h$, we use two vertices $v^{\ell}_1$ and $v^{r}_1$, see Figure~\ref{f:optDPHard1b}.
Let $\tau_2 = \langle R_1,u^{r}_1,u^{\ell}_1,L_1,v^{\ell}_1,v^{r}_1,R_2,u^{r}_2,u^{\ell}_2,L_2,v^{\ell}_2,\ldots,L_h \rangle$ be a polyline connecting all points in the construction; $\tau_2$ will also be part of the input $P$. 

\subsubsection*{Part 3: Putting it together}
All vertices in $\tau_1$ can be covered by the simplification $\langle f_1,f_2,...,f_{n+1} \rangle$ and a suitable choice of links in $\tau_2$.
Therefore, the last part is a polyline that will definitely cover all vertices in $\tau_2$ and at the same time, serve as a proper connection between $\tau_1$ and $\tau_2$.
Consequently, all vertices in this part will also be {\em forced} and therefore be a part of the final simplified polyline.

\begin{figure}[tb]
	\centering
	\includegraphics[scale=1]{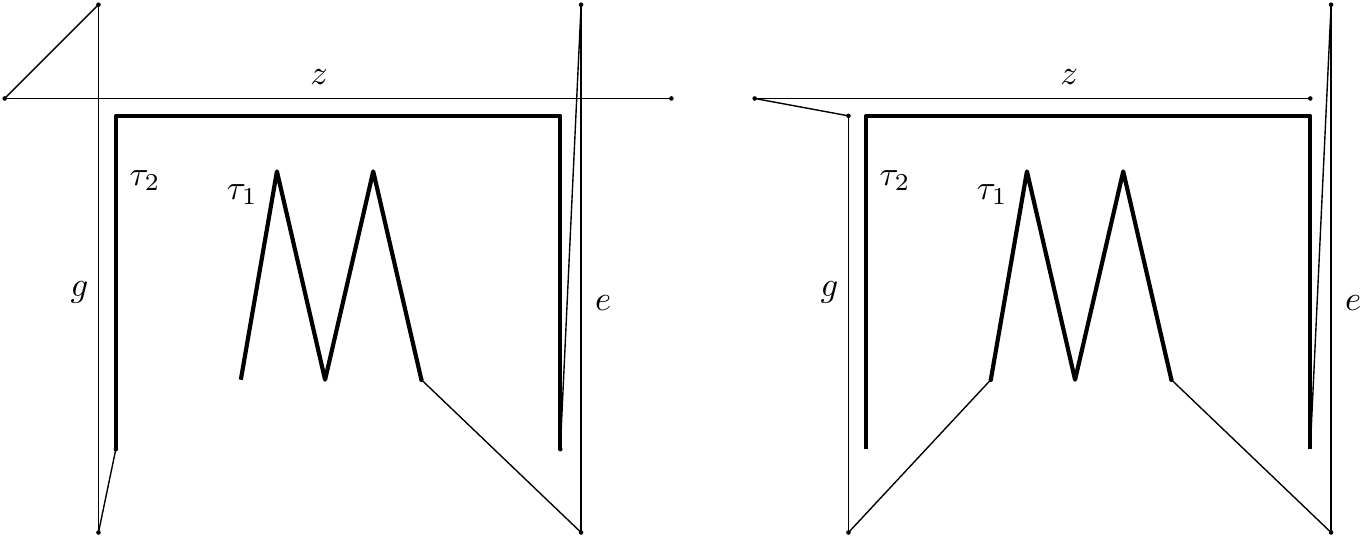}
	\caption{Schematic views of connecting up different parts of the NP hardness construction into a single polyline.
		The bold polylines show $\tau_1$ and $\tau_2$ and indicate multiple parts of $P$ close together.}
	\label{f:NPhardscheme}
\end{figure}

We divide this last part into two disconnected polylines: $\tau_{3_a}$ and $\tau_{3_b}$.
The main part of $\tau_{3_a}$ is a vertical line segment $e$ that is parallel to $R_1$.
There is a restriction to $e$: the Hausdorff distance from each of $R_i, u^{r}_i, v^{r}_j (1 \le j < i \le h)$, and also from line segments between them to $e$ should not be larger than $\eps$.
In order to force $e$ to be a part of the simplified polyline, we must place its endpoints away from $\tau_2$.
Then, $\tau_1$ and $\tau_2$ can be connected by connecting $f_{n+1} \in \tau_1$ and the first vertex in $R_1$ to different endpoints of $e$.

Next, the rest of $\tau_2$ that has not been covered yet, will be covered by $\tau_{3_b}$.
First, we have a vertical line segment $g$ that is similar to $e$, in order to cover $L_i, u^{\ell}_i, v^{\ell}_j$ ($1 \le j < i \le h$), and all line segments between them.
Then, a horizontal line segment $z$ is needed to cover all horizontal line segments  $\oli{u^{r}_iu^{\ell}_i}$ and $\oli{v^{\ell}_jv^{r}_j}$ ($1 \le j < i \le h$).
Similar to $e$, the endpoints of $g$ and $z$ should be located far from $\tau_2$,
implying that $z$ intersects both $e$ and $g$. This is shown in Figure~\ref{f:NPhardscheme}, left.
We complete the construction by connecting the upper endpoint of $g$ to the left endpoint of $z$ and the
lower endpoint of $g$ to the last vertex in $L_h$.

We can show that even if the input is restricted to be non-self-intersecting, the simplification problem
is still NP-hard. We modify the last part of the construction to remove the three intersections.
Firstly, we shorten $z$ on the right side and place it very close to $u^{r}_1$. Since the right
endpoint of $z$ is an endpoint of the input, it will always be included in a simplification.
Secondly, to remove the intersection of $g$ and $z$, we bring the upper endpoint of $g$ to just below $z$,
so very close to $u^{\ell}_1$. To make sure that we must include $g$ in the simplification we
connect the lower endpoint of $g$ to $f_1$. This connecting segment is further from $g$ so it cannot
help enough to cover the lower part of $g$; only $g$ itself can do that. This is shown in Figure~\ref{f:NPhardscheme}, right.

\begin{figure}
	\centering
	\includegraphics[scale=0.95]{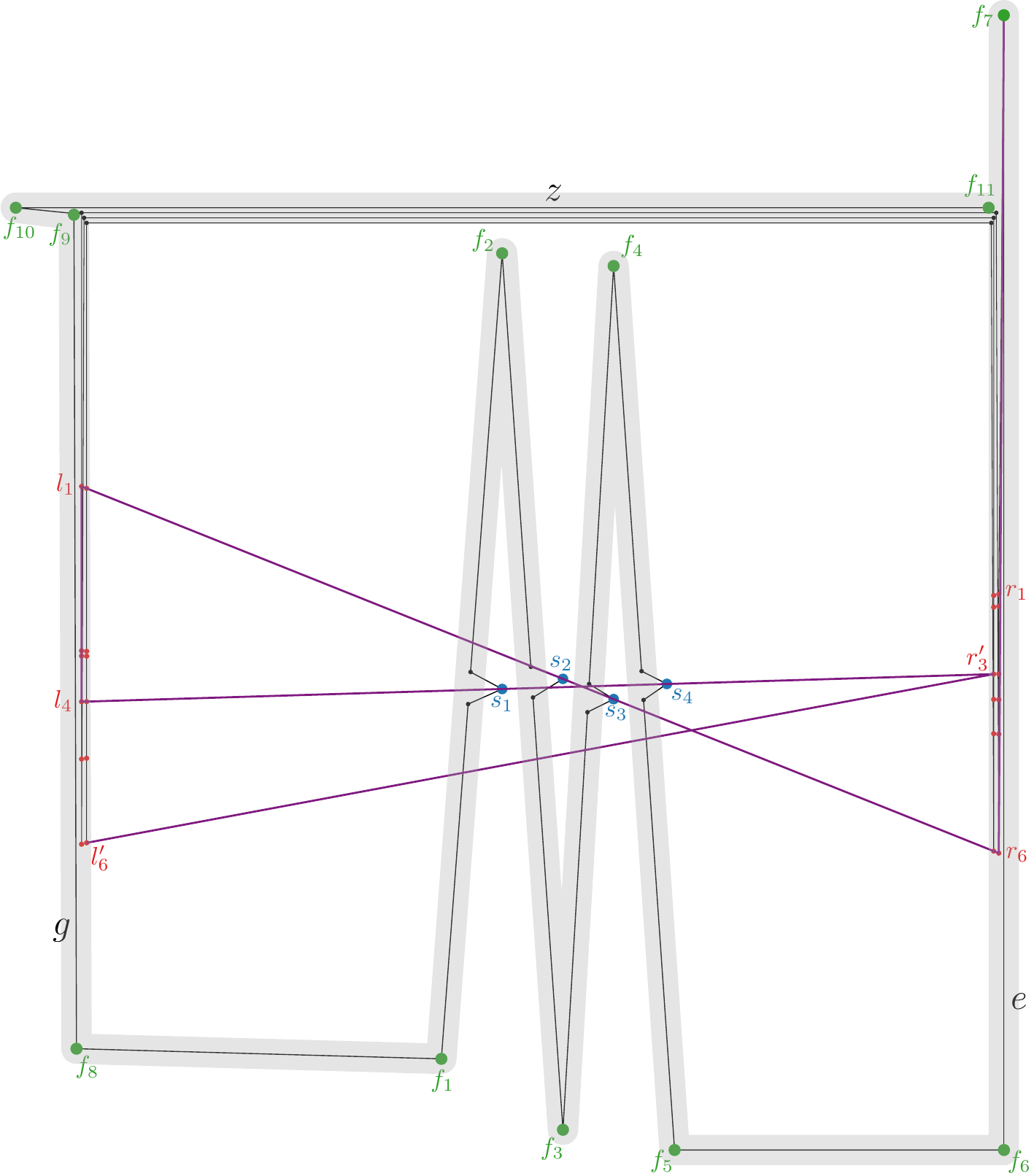}
	\caption{The full construction showing that computing \OPTH is NP-hard.
		$\tau_{3_a}$ is a line segment $e = \oli{f_6,f_7}$ and $\tau_{3_b} = \langle f_8,...,f_{11} \rangle$.
		The endpoints of the construction are $f_{11}$ and $l'_6 \in L_2$.
		The gray area is within $\eps$ from the sub-polyline consist of all green vertices: $\langle f_{11},..,f_8,f_1,..,f_7 \rangle$, which is a part of the simplification.
		The rest of the simplification is the purple polyline $\langle f_7,r_6,l_{1},l_{4},r'_{3},l'_6 \rangle$ that covers all blue points $S$ ($r'_{3} \in R_2$ and $l'_{6}  \in L_2$).
		In order to show the red points clearly, $\eps$ used in this figure is larger than it needs to be.
		Consequently, a link $\oli{s_1s_4}$ can cover $s_2$ and $s_3$, which is not possible if $\eps$ is considerably smaller.}
	\label{f:optDPHard2}
\end{figure}

We present a full construction of $P = \langle \tau_{3_b},\tau_1,\tau_{3_a},\tau_{2} \rangle$ for $n=4$ in Figure~\ref{f:optDPHard2}.

\begin{theorem}
	Given a polyline $P = \langle p_1, p_2, \ldots, p_n \rangle$ and a value $\eps$, the problem of computing a minimum length subsequence $Q$ of $P$ such that the directed Hausdorff distance from $P$ to $Q$ is at most $\eps$ is NP-hard.
\end{theorem}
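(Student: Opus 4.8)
The plan is to show that the construction $P = \langle \tau_{3_b}, \tau_1, \tau_{3_a}, \tau_2 \rangle$ is a polynomial-time reduction from {\scshape Covering Points By Lines}, by proving that the minimum size of a valid simplification $Q$ equals a fixed quantity $|F|$ plus twice the minimum number $m^*$ of lines needed to cover $S$. First I would establish that every forced point (the $f_1,\ldots,f_{n+1}$ of $\tau_1$ together with the endpoints of $e$, $g$, and $z$, and the endpoints of $P$) must occur in \emph{every} valid $Q$: by definition a forced point $f$ lies at distance more than $\eps$ from every line spanned by a pair of points of $P$, so the only vertex of $Q$ that can witness distance at most $\eps$ to $f$ is $f$ itself. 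This pins down a fixed subset $F \subseteq Q$ of size $n + O(1)$.

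Next I would verify that the scaffolding sub-polyline on $F$ --- namely $\langle f_1, \ldots, f_{n+1} \rangle$ together with the segments $e$, $g$, and $z$ --- already covers, within distance $\eps$, every point and every edge of $P$ \emph{except} the points of $S$. For $\tau_1$ this is precisely the property built into Part~1 (the path $\langle f_1, \ldots, f_{n+1} \rangle$ covers the gray region but misses each $s_i$, which sits at distance $\tfrac32\eps$ from $\oli{f_i f_{i+1}}$); for $\tau_2$ it is the defining purpose of $e$, $g$, and $z$, which were placed to cover all of $R_i,u^r_i,v^r_j$, all of $L_i,u^\ell_i,v^\ell_j$, and all the horizontal connectors, respectively. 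Consequently the only part of $P$ that a valid $Q$ must still account for, beyond $F$, is the point set $S$.

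The crux is then to show that the cheapest way to cover $S$ costs exactly $2m^*$ additional vertices. For the upper bound I would take an optimal cover by $m^*$ lines, assume without loss of generality that each line passes through at least two points of $S$ (so each is some $\lambda_i \in \Lambda$), and realize the $j$-th chosen line inside the $j$-th copy by inserting the consecutive pair $r_i^{(j)}, l_i^{(j)}$ into $Q$; since $m^* \le h = \lceil n/2 \rceil$, enough copies exist, the fixed left-to-right order of the copies is respected, and each link $\oli{r_i^{(j)} l_i^{(j)}}$ lies on $\lambda_i$ and hence covers all points of $S$ on $\lambda_i$. For the lower bound I would argue that in any valid $Q$ every $s \in S$ is covered by some link, that such a link must be either a matched link $\oli{l_i r_i}$ (by the second property of $L,R$, any link $\oli{l_i r_j}$ with $i\neq j$ stays farther than $\eps$ from $S$, and all vertical and scaffolding links are far from $S$) or a link inside $\tau_1$, and that covering an $s_i$ inside $\tau_1$ costs at least three extra vertices (otherwise an auxiliary point $t^\pm_i$, which sits at distance $2\eps$ from $s_i$, is left uncovered). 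Grouping the matched links used by $Q$ then yields at most (number of extra vertices)$/2$ lines covering all of $S$, matching the upper bound.

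Combining these gives $|Q_{\min}| = |F| + 2m^*$, so a polynomial-time algorithm for the minimum directed-Hausdorff simplification would recover $m^* = (|Q_{\min}| - |F|)/2$ and hence solve {\scshape Covering Points By Lines}; NP-hardness follows. I expect the main obstacle to be the lower-bound direction: one must simultaneously rule out ``cheating'' by collecting $S$-points early inside $\tau_1$, rule out unmatched $L$--$R$ links, and confirm that the fixed ordering of the $h$ copies never forces an optimal cover to spend more than $2m^*$ vertices --- that is, that the copy structure faithfully serializes an intrinsically unordered line cover.
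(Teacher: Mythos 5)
Your proposal follows essentially the same route as the paper's own proof: the forced points pin down a fixed scaffold ($\langle f_1,\ldots,f_{n+1}\rangle$ plus $e$, $g$, $z$) covering all of $P$ except $S$, collecting an $s_i$ inside $\tau_1$ is ruled out as costing three extra vertices, and the remaining cost is charged at two vertices per matched $L$--$R$ link, so optimal simplifications correspond to minimum line covers and the reduction from {\scshape Covering Points By Lines} goes through. You are in fact somewhat more explicit than the paper about the vertex accounting and about the remaining lower-bound obstacles (unmatched links, the serialization of the cover across the $h$ copies), which the paper's proof also treats only informally.
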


\begin {proof}
The construction contains $O(n^2)$ vertices and a part of its simplified polyline with a constant number of vertices that contains $f_1,f_2,...,f_{n+1}$ and all vertices in $\tau_{3_a}$ and $\tau_{3_b}$ can cover all vertices in the construction except for $S$.
Then, the other part of the simplified polyline depends on links to cover points in $S$. These links
alternate between going from left to right and from right to left. Between two such links, we will
have exactly two vertices from some $L$ or two from some $R$.

The only two ways a point $s_i$ can be covered is by including $s_i$ explicitly or by one of the $O(n)$ links
that cover $s_i$ and at least another point $s_j$. If we include $s_i$ explicitly then we must also include
$t_i^+$ and $t_i^-$ or else they are not covered. It is clearly more efficient (requiring fewer vertices
in the simplification) if we use a link that covers $s_i$ and another $s_j$, even if $s_j$ is covered
by another such link too. The links of this type in an optimal simplified polyline correspond
precisely to a minimum set of lines covering $s_1,\ldots,s_n$.
Therefore, the simplified polyline of the construction contains a solution to {\scshape Covering Points By Lines} instance. Since $P$ in the construction is simple, the theorem holds even for simple input.
\end {proof}

\subsection {Directed Hausdorff distance: $Q \to P$}

Finally, we finish this section with a note on the reverse problem: we want to only bound the directed Hausdorff distance from $Q$ to $P$ (we want the output segment to stay close to the input segment, but we do not need to be close to all parts of the input). This problem seems more esoteric but we include it for completeness. In this case, a polynomial time algorithm (reminiscent of Imai-Iri) optimally solves the problem.

\begin {theorem}
	Given a polyline $P = \langle p_1, p_2, \ldots, p_n \rangle$ and a value $\eps$, the problem of computing a minimum length subsequence $Q$ of $P$ such that the directed Hausdorff distance from $Q$ to $P$ is at most $\eps$ can be solved in polynomial time.
\end {theorem}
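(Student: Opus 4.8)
The plan is to mimic the Imai-Iri graph approach, but with a different and in fact simpler notion of a valid link. Observe that the directed Hausdorff distance from $Q$ to $P$ equals $\max_{x \in Q} \min_{y \in P} d(x,y)$, so it is at most $\eps$ precisely when every point of $Q$ lies within distance $\eps$ of $P$. Writing $\F = P^{\oplus \eps}$ for the $\eps$-neighbourhood of $P$ (the union of the $n-1$ stadium-shaped Minkowski sums of the edges of $P$ with a disk of radius $\eps$, together with the disks at the vertices), this condition says exactly $Q \subseteq \F$. Since the point set $Q$ is the union of its links, I would first record the key reformulation: a candidate simplification $Q$ is feasible if and only if each of its links $\oli{q_aq_{a+1}}$ is contained in $\F$. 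Crucially, and unlike the Imai-Iri criterion, validity of a link $\oli{p_ip_j}$ does not depend on the sub-polyline $\langle p_i,\ldots,p_j \rangle$ between its endpoints, only on the single fixed region $\F$; this locality is what makes the problem tractable.

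Given this reformulation, the algorithm is the standard one. First I would, for every pair $i<j$, decide whether the link $\oli{p_ip_j}$ is valid, i.e.\ whether $\oli{p_ip_j} \subseteq \F$. Then I build the directed graph $G$ on the vertices $p_1,\ldots,p_n$ whose edges are the valid links $(p_i,p_j)$ with $i<j$, and compute a minimum-link (breadth-first) path from $p_1$ to $p_n$. By the reformulation such a path corresponds to a feasible $Q$ of minimum size, and conversely every feasible $Q$ is such a path, so the shortest path yields $\OPT$. The graph has $O(n^2)$ edges and BFS runs in $O(n^2)$ time, so the only remaining question is the cost of the validity tests.

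The heart of the proof, and the step I expect to require the most care, is the validity test: checking whether a single segment lies inside the \emph{non-convex} union $\F$ of $O(n)$ stadiums. One cannot simply test the endpoints of $\oli{p_ip_j}$, since the segment may leave $\F$ in a gap between stadiums. The key observation is that each stadium is convex, so its intersection with the line supporting $\oli{p_ip_j}$ is a single (possibly empty) interval; parametrising the link as $\{p_i + t(p_j-p_i) : t \in [0,1]\}$, each of the $O(n)$ stadiums contributes one interval $[a_m,b_m] \subseteq [0,1]$, computable in constant time. The link is valid if and only if the union of these $O(n)$ intervals covers $[0,1]$, which I can decide in $O(n\log n)$ time by sorting the interval endpoints and sweeping. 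Performing this for all $O(n^2)$ links and then running BFS gives $\OPT$ in $O(n^3\log n)$ time overall, which is polynomial. Finally, I would emphasise the contrast with the hard cases: the $Q \to P$ measure only constrains \emph{where} $Q$ is allowed to go, a condition that decomposes link by link, whereas the undirected and $P \to Q$ variants additionally demand that $Q$ \emph{cover} all of $P$, a global set-cover-type requirement with no analogous local decomposition — which is exactly why a plain shortest-path computation succeeds here but fails there.
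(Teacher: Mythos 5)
Your proof is correct and follows essentially the same approach as the paper: reduce feasibility to containment of each link in the $\eps$-neighbourhood of $P$ (which decomposes link by link), and then compute a minimum-link path in the resulting graph of valid links. Your interval-covering containment test even yields $O(n^3\log n)$ overall, slightly better than the $O(n^4)$ bound the paper states for the simple implementation, consistent with its remark that faster is possible.
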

\begin{proof}
We compute the region with distance $\eps$ from $P$ explicitly. For every link we compute if it lies within that
region, and if so, add it as an edge to a graph. Then we find a minimum link path in this graph. For a possibly self-intersecting polyline as the input a simple algorithm takes $O(n^4)$ time (faster is possible).
\end{proof}


\section{Algorithmic complexity of the \frechet distance}	
\label{sec:optalg}

In this section, we show that for a given polyline $P = \langle p_1,p_2,...,p_n \rangle$ and an error $\eps$, the optimal simplification $Q = \OPTF(P,\eps)$ can be computed in polynomial time using a dynamic programming approach.

\subsection {Observations}

Note that a
link $\oli{p_ip_j}$ in $Q$ is not necessarily within \frechet distance $\eps$ to the sub-polyline $\langle p_i,p_{i+1},...,p_j \rangle$ (for example, $\oli{p_1p_3}$ in Figure~\ref{f:noptFrech}).
Furthermore, a (sequence of) link(s) in $Q$ could be mapped to an arbitrary subcurve of $P$, not necessarily starting or ending at a vertex of $P$.
For example, in Figure~\ref{f:epsIIFrech2}, the sub-polyline $\langle p_1,p_4,p_5,p_6 \rangle$ has \frechet distance $\eps$ to a sub-polyline of $P$ that starts at $p_1$ but ends somewhere between $p_4$ and $p_5$.
At this point, one might imagine a dynamic programming algorithm which stores, for each vertex $p_i$ and value $k$, the point $p(i, k)$ on $P$ which is the farthest along $P$ such that there exists a simplification of the part of $P$ up to $p_i$ using $k$ links that has \frechet distance at most $\eps$ to the part of $P$ up to $p(i, k)$. However, the following lemma shows that even this does not yield optimality; its proof is the example in Figure~\ref {f:NotSuf}.

\begin {lemma} \label {lem:NotSuf}
  There exists a polyline $P = \langle p_1, \ldots, p_{12} \rangle$ and an optimal $\eps$-\frechet-simplification that has to use $p_4$, $Q = \langle p_1, p_2, p_4, p_5, p_{12}\rangle$ using $4$ links,
  with the following properties:
  \begin {itemize}
    \item There exists a partial simplification $R = \langle p_1, p_3, p_4 \rangle$ of $\langle p_1, \ldots, p_4 \rangle$ and a point $r$ on $\overline {p_5p_6}$ such that the \frechet distance between $R$ and the subcurve of $P$ up to $r$ is $\le \eps$, but
    \item there exists no partial simplification $S$ of $\langle p_4, \ldots, p_{12} \rangle$ that is within \frechet distance $\eps$ to the subcurve of $P$ starting at $r$ that uses fewer than $7$ links.
  \end {itemize}
\end {lemma}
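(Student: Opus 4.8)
The plan is to prove the lemma by exhibiting an explicit construction --- the polyline $P=\langle p_1,\ldots,p_{12}\rangle$ drawn in Figure~\ref{f:NotSuf} --- and verifying the three claimed properties directly. The conceptual design splits $P$ into a \emph{head} $\langle p_1,\ldots,p_4\rangle$ and a \emph{zigzag tail} $\langle p_4,\ldots,p_{12}\rangle$. The head is arranged so that it admits two genuinely different two-link partial simplifications that both terminate at the vertex $p_4$: the prefix $\langle p_1,p_2,p_4\rangle$ of $Q$, whose matching reaches $P$ only up to $p_4$, and the alternative $R=\langle p_1,p_3,p_4\rangle$, whose matching ``overshoots'' and reaches a point $r$ on $\overline{p_5p_6}$ lying strictly beyond $p_4$ along $P$. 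The tail is a zigzag of the same flavour as the one in Figure~\ref{f:epsIIFrech2}, tuned so that the \emph{phase} at which one enters it is decisive: entering in phase with $p_4$ (as $Q$ does) lets the remaining two links $\langle p_4,p_5,p_{12}\rangle$ cover the whole tail by ``staying behind,'' whereas entering out of phase at $r$ forces the zigzag to be traced essentially vertex by vertex.

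First I would verify property (i): that $Q=\langle p_1,p_2,p_4,p_5,p_{12}\rangle$ has \frechet distance at most $\eps$ to all of $P$, that it is optimal (no three-link simplification of $P$ exists), and that every optimal simplification is forced to include $p_4$. The distance bound is an explicit monotone parametrization; optimality and the necessity of $p_4$ follow from the head geometry, where skipping $p_4$ would leave one of the head turns uncoverable by any admissible link. Next I would verify property (ii) by writing down the monotone parametrization witnessing $d_F(R,\,P[p_1\to r])\le\eps$: the link $\overline{p_1p_3}$ absorbs the subcurve up to an intermediate point, and $\overline{p_3p_4}$ --- placed roughly parallel to the first tail segments --- absorbs the remainder all the way to $r$ past $p_5$. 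This is precisely the step that makes $R$ look strictly better than the prefix of $Q$ to a naive dynamic program, since it reaches farther along $P$ with the same number of links.

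The main obstacle is property (iii): the lower bound that no simplification $S$ of the tail, matched to the subcurve of $P$ starting at $r$, can use fewer than $7$ links. Upper bounds on link count come from exhibiting a matching, but a lower bound must rule out \emph{all} matchings, for which I would argue through the free-space diagram of $S$ against $P[r\to p_{12}]$. The key point is that forcing the matching to begin at $r$, rather than at the position occupied by $p_4$, destroys the ``staying behind'' buffer that the suffix of $Q$ exploits; I would make this precise by marking the tips of the zigzag spikes on $P[r\to p_{12}]$ and showing that, once one is out of phase, any single link of $S$ whose $\eps$-tube meets the subcurve monotonically can be $\eps$-close to at most one such tip. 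Hence at least $7$ links are required, contradicting the assumption of fewer. Finally I would confirm the global accounting that motivates the lemma: routing through $R$ costs $2+7=9$ links whereas routing through the prefix of $Q$ costs $2+2=4$, so reaching farther along $P$ with the same number of links is strictly worse --- exactly the failure of the ``farthest reachable point'' dynamic-programming state that the lemma is designed to exhibit.
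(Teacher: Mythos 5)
Your proposal matches the paper's approach: the paper proves this lemma purely by exhibiting the example of Figure~\ref{f:NotSuf} --- a head admitting two different $2$-link partial simplifications ending at $p_4$, one of which reaches farther (to a point $r$ beyond $p_5$), attached to a zigzag tail that can be covered by two links only when entered ``in phase'' --- which is exactly the construction and verification plan you describe. Your free-space-diagram sketch for the out-of-phase lower bound is in fact more explicit than anything the paper provides, since the paper offers no written argument beyond the figure itself.
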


\begin{figure}
	\centering
	\includegraphics[scale=1]{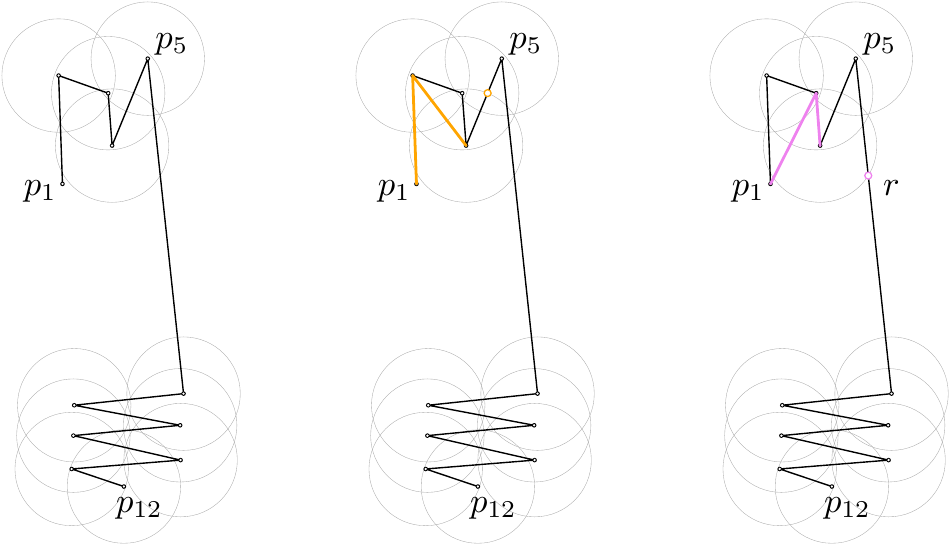}
	\caption{An example where the farthest-reaching simplification up to $p_4$ using $2$ links is not part of any solution that uses $p_4$. Left: the input curve $P$ in black, with circles of radius $\eps$ around all vertices in light gray. Middle: A $2$-link simplification of $\langle p_1, p_2, p_3, p_4 \rangle$ that reaches up to a point on $\overline {p_4p_5}$ (in yellow) which can be extended to a $4$-link simplification of $P$. Right: A $2$-link simplification of $\langle p_1, p_2, p_3, p_4 \rangle$ that reaches point $r$ on $\overline {p_5p_6}$ (in pink) which does not allow simplification.}
	\label{f:NotSuf}
\end{figure}



\subsection {A dynamic programming algorithm}

Lemma~\ref {lem:NotSuf} shows that storing a single data point for each vertex and value of $k$ is not sufficient to ensure that we find an optimal solution. Instead, we argue that if we maintain the set of {\em all} points at $P$ that can be ``reached'' by a simplification up to each vertex, then we can make dynamic programming work.
We now make this precise and argue that the complexity of these sets of reachable points is never worse than linear.

First, we define $\pi$, a parameterization of $P$ as a continuous mapping: $\pi:[0,1] \rightarrow \R^2$ where $\pi(0) = p_1$ and $\pi(1) = p_n$.
We also write $P[s,t]$ for $0 \le s \le t \le 1$ to be the subcurve of $P$ starting at $\pi(s)$ and ending at $\pi(t)$, also writing $P[t] = P[0,t]$ for short.

We say that a point $\pi(t)$ can be {\em reached} by a $(k, i)$-simplification for $0 \le k < i \le n$ if there exists a simplification of $\langle p_1, \ldots, p_i \rangle$ using $k$ links which has \frechet distance at most $\eps$ to $P[t]$.
We let $\rho (k, i, t) = \mathtt{true}$ in this case, and $\mathtt{false}$ otherwise.
With slight abuse of notation we also say that $t$ itself is reachable, and that an interval $I$ is reachable if all $t \in I$ are reachable (by a $(k, i)$-simplification).

\begin {obs} \label {obs:rec}
  A point $\pi(t)$ can be reached by a $(k, i)$-simplification
  if and only if there exist a $0 < h < i$ and a $0 \le s \le t$ such that
  $\pi(s)$ can be reached by a $(k-1, h)$-simplification and the segment $\overline {p_hp_i}$ has \frechet distance at most $\eps$ to $P[s,t]$.
\end {obs}

\begin {proof}
  Follows directly from the definition of the \frechet distance.
\end {proof}

Observation~\ref {obs:rec} immediately suggests a dynamic programming algorithm:
for every $k$ and $i$ we store a subdivision of $[0,1]$ into intervals where $\rho$ is true and intervals where $\rho$ is false, and we calculate the subdivisions for increasing values of $k$.
We simply iterate over all possible values of $h$, calculate which intervals can be reached using a simplification via $h$, and then take the union over all those intervals.
For this, the only unclear part is how to calculate these intervals.

We argue that, for any given $k$ and $i$, there are at most $n-1$ reachable intervals on $[0,1]$, each contained in an edge of $P$.
Indeed, every $(k, i)$-reachable point $\pi(t)$ must have distance at most $\eps$ to $p_i$, and since the edge $e$ of $P$ that $\pi(t)$ lies on intersects the disk of radius $\eps$ centered at $p_i$ in a line segment, every point on this segment is also $(k, i)$-reachable.
We denote the farthest point on $e$ which is $(k, i)$-reachable by $\hat t$.

Furthermore, we argue that for each edge of $P$, we only need to take the farthest reachable point into account during our dynamic programming algorithm.
\begin {lemma}
If $k$, $h$, $i$, $s$, and $t$ exist such that $\rho (k-1, h, s) = \rho (k, i, t) = \mathtt {true}$, and $\overline {p_hp_i}$ has \frechet distance $\le \eps$ to $P[s,t]$, then $\overline {p_hp_i}$ also has \frechet distance $\le \eps$ to $P[\hat s, \hat t]$.
\end {lemma}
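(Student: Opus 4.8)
The plan is to take the reparameterization that witnesses $\frechet$ distance $\le \eps$ between $\oli{p_hp_i}$ and $P[s,t]$ and surgically modify it at both ends to obtain a witness for $P[\hat s,\hat t]$, using only the convexity of the radius-$\eps$ disk around each segment endpoint. Throughout I rely on the facts established just above the lemma: that $\hat s \ge s$ and $\hat t \ge t$ (the hatted points are the farthest reachable points on the edges carrying $\pi(s)$ and $\pi(t)$), that $\pi(\hat s)$ lies within $\eps$ of $p_h$, and that $\pi(\hat t)$ lies within $\eps$ of $p_i$ (every reachable point on an edge lies in the $\eps$-disk of the relevant vertex).

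First I would extend the witness at the end. Since $\pi(t)$ and $\pi(\hat t)$ lie on a common edge and both lie within $\eps$ of $p_i$, the whole subsegment $P[t,\hat t]$ lies in the disk of radius $\eps$ about $p_i$. Hence I can prolong the witness by holding its image fixed at $p_i \in \oli{p_hp_i}$ while the curve parameter sweeps from $\pi(t)$ to $\pi(\hat t)$; this keeps every matched pair within $\eps$ and stays monotone, producing a witness that $\oli{p_hp_i}$ has $\frechet$ distance $\le \eps$ to $P[s,\hat t]$.

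Next I would advance the start from $s$ to $\hat s$. Let $x$ be the point of $\oli{p_hp_i}$ that the extended witness matches to $\pi(\hat s)$; then $|x - \pi(\hat s)| \le \eps$, and since also $|p_h - \pi(\hat s)| \le \eps$, convexity of the $\eps$-disk about $\pi(\hat s)$ gives that the initial subsegment $\oli{p_hx}$ lies entirely within $\eps$ of $\pi(\hat s)$. I then build the final witness for $P[\hat s,\hat t]$ in two pieces: on $\oli{p_hx}$ I hold the curve parameter fixed at the start point $\pi(\hat s)$ while the segment parameter sweeps $\oli{p_hx}$ (valid by the disk containment just noted), and on $\oli{xp_i}$ I reuse the monotone restriction of the extended witness, which by monotonicity matches $\oli{xp_i}$ to exactly $P[\hat s,\hat t]$. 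Concatenating the two pieces yields a monotone matching with all distances $\le \eps$, which is precisely a witness that $\oli{p_hp_i}$ has $\frechet$ distance $\le \eps$ to $P[\hat s,\hat t]$.

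The step I expect to need the most care is the bookkeeping that makes the two surgeries compatible, in particular the implicit ordering $\hat s \le \hat t$ needed for $P[\hat s,\hat t]$ to be a well-defined subcurve and for $\pi(\hat s)$ to lie on $P[s,\hat t]$. When $\pi(s)$ and $\pi(t)$ lie on different edges this is automatic, since $\hat s$ cannot pass the end of its edge while $\hat t \ge t$ sits on a later edge. The only delicate case is when $\pi(s)$ and $\pi(t)$ share an edge, where $P[s,t]$, $P[s,\hat t]$ and $P[\hat s,\hat t]$ are all collinear subsegments and the claim collapses to the elementary fact that the $\frechet$ distance between two segments equals the larger of their two endpoint distances, both of which are $\le \eps$ here. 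I would fold this degenerate case into the argument explicitly, so that the two-sided extension is justified in every configuration.
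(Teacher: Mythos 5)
Your proof is correct and takes essentially the same route as the paper's: both surgically modify the witness reparameterization by mapping $\pi(\hat s)$ to the entire initial subsegment $\oli{p_h x}$ (your $x$ is the paper's $q = f(\hat s)$), justified by convexity of the $\eps$-disk, and by collapsing the tail $P[t,\hat t]$ onto the point $p_i$. Your explicit handling of the ordering needed for $f(\hat s)$ to be defined and of the degenerate case where $\pi(s)$ and $\pi(t)$ share an edge is a point of care that the paper's proof glosses over, but it does not change the argument.
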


\begin {proof}
  By the above argument, $P[s, \hat s]$ is a line segment that lies completely within distance $\eps$ from $p_h$, and $P[t, \hat t]$ is a line segment that lies completely within distance $\eps$ from $p_i$.

  We are given that the \frechet distance between $\overline {p_hp_i}$ and $P[s,t]$ is at most $\eps$; this means  a mapping $f: [s, t] \to \overline {p_hp_i}$ exists such that $|\pi(x) - f(x)| \le \eps$. Let $q = f(s')$. Then  $|p_h - \pi(\hat s)| \le \eps$ and $|q - \pi(\hat s) | \le \eps$, so the line segment $\oli{p_h q}$ lies fully within distance $\eps$ from~$\hat s$.

  Therefore, we can define a new $\eps$-\frechet mapping between $P [\hat s, \hat t]$ and $\overline {p_hp_i}$ which maps $\hat s$ to the segment $\oli{p_h q}$, the curve $P [\hat s, t]$ to the segment $\oli{q p_i}$ (following the mapping given by $f$), and the segment $\oli{\pi (t) \pi (\hat t)}$ to the point $p_i$.
\end {proof}

Now, we can compute the optimal simplification by maintaining a $k\times n\times n$ table storing $\rho (k, i, \hat t)$, and calculate each value by looking up $n^2$ values for the previous value of $k$, and testing in linear time for each combination whether the \frechet distance between the new link and $P [\hat s, \hat t]$ is within $\eps$ or not.

\begin{theorem}
Given a polyline $P = \langle p_1,...,p_n \rangle$ and a value $\eps$, we can compute the optimal polyline simplification of $P$ that has \frechet distance at most $\eps$ to $P$ in $O(k n^5)$ time and $O(k n^2)$ space, where $k$ is the output complexity of the optimal simplification.
\end{theorem}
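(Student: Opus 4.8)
The plan is to promote Observation~\ref{obs:rec} and the two preceding facts—that for each pair $(k,i)$ the reachable parameters form at most $n-1$ intervals, one per edge of $P$, and that only the farthest reachable point on each edge matters—into a layered dynamic program, and then to account for its cost. First I would fix the state: a $k\times n\times n$ table whose entry for $(k,i)$ records, for each edge of $P$, the farthest parameter $\hat t$ reachable by a $(k,i)$-simplification. Storing these farthest points (rather than whole intervals) is justified by the farthest-point lemma, and the table has $O(kn^2)$ cells, giving the claimed space bound; we keep all $k$ layers so that $\OPTF$ itself, and not merely its size, can be reconstructed.

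The base layer $k=1$ corresponds to the single link $\oli{p_1p_i}$, whose reachable parameters along $P$ are read directly from the free space diagram of $\oli{p_1p_i}$ against $P$ in $O(n)$ time. For the inductive step I would apply Observation~\ref{obs:rec}: $t$ is $(k,i)$-reachable exactly when there is an $h<i$ and a $(k-1,h)$-reachable $s\le t$ for which the \frechet distance between $\oli{p_hp_i}$ and $P[s,t]$ is at most $\eps$. The farthest-point lemma is what makes the reduced state sound here: whenever such an admissible pair $(s,t)$ exists, replacing it by the stored farthest pair $(\hat s,\hat t)$ on the respective edges remains admissible, so it suffices to try only the farthest starting parameters $\hat s$ of layer $k-1$. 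Concretely, to fill the entries for $(k,i)$ I would range over all $h<i$ and over the $O(n)$ stored starting points $\hat s$ of each $(k-1,h)$, and for every such source compute the parameters of $P$ reachable through the link $\oli{p_hp_i}$, unioning the results to obtain the new farthest point on every edge.

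For the running time I would follow the accounting suggested by this description. There are $O(kn^2)$ entries, and each is obtained by examining the $O(n^2)$ source points $(h,\hat s)$ of the previous layer and testing, in $O(n)$ time per source, whether the \frechet distance between the new link $\oli{p_hp_i}$ and the subcurve $P[\hat s,\hat t]$ is at most $\eps$; this is a standard linear-time reachability computation in the free space diagram of a segment against a polyline. Hence each entry costs $O(n^3)$ and the whole table costs $O(kn^5)$. Finally, the optimal simplification is the one for the smallest $k$ with $\rho(k,n,1)=\mathtt{true}$—that is, for which the endpoint $p_n=\pi(1)$ of $P$ is reachable, so that all of $P$ lies within \frechet distance $\eps$—and the vertices of $\OPTF$ are recovered by tracing the chosen indices $h$ back through the layers.

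The main obstacle is not the assembly above, which is routine once the lemmas are available, but making certain that the farthest-point reduction really is compatible with the recurrence. Lemma~\ref{lem:NotSuf} shows that keeping a single reachable point per $(k,i)$ loses optimality, so the genuine content sits in the farthest-point lemma, which certifies that keeping the farthest point \emph{per edge} does not. I would therefore argue carefully that substituting $(\hat s,\hat t)$ for an arbitrary admissible $(s,t)$ never invalidates a subsequent link, so that taking the union over farthest starting points, layer by layer, discards no feasible simplification; with that in hand the correctness of the dynamic program and the stated bounds follow.
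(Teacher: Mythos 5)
Your proposal follows the paper's proof essentially verbatim: the same reachability predicate $\rho(k,i,t)$, the same reduction to one farthest point $\hat t$ per edge justified by the farthest-point lemma, the same recurrence from Observation~\ref{obs:rec}, and the same accounting of $O(kn^2)$ table entries each filled by examining $O(n^2)$ predecessor states with a linear-time \frechet test, giving $O(kn^5)$ time and $O(kn^2)$ space. The only additions (an explicit $k=1$ base case and back-tracing for reconstruction) are routine and do not change the approach.
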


\section{Conclusions}
In this paper, we analyzed well-known polygonal line simplification algorithms, the Douglas-Peucker and the Imai-Iri algorithm, under both the Hausdorff and the \frechet distance. Both algorithms are not optimal when considering
these measures.
We studied the relation between the number of vertices in the resulting simplified polyline from both algorithms and the enlargement factor needed to approximate the optimal solution.
For the Hausdorff distance, we presented a polyline where the optimal simplification uses only a constant number of vertices while the solution from both algorithms is the same as the input polyline, even if we enlarge $\eps$ by any constant factor.
We obtain the same result for the Douglas-Peucker algorithm under the \frechet distance.
For the Imai-Iri algorithm, such a result does not exist but we have shown that we will need a constant factor
more vertices if we enlarge the error threshold by some small constant, for certain polylines.

Next, we investigated the algorithmic problem of computing the optimal simplification using the Hausdorff and the \frechet distance.
For the directed and undirected Hausdorff distance, we gave NP hardness proofs. Interestingly, the optimal simplification in the other direction (from output to input) is solvable in polynomial time.
Finally, we showed how to compute the optimal simplification under the \frechet distance in polynomial time.
Our algorithm is based on the dynamic programming method and runs in $O(kn^5)$ time and requires $O(kn^2)$ space.

A number of challenging open problems remain. First, we would like to show NP-hardness
of computing an optimal simplification using the Hausdorff distance when the simplification may not
have self-intersections. Second, we are interested in the computational status of
the optimal simplification under the Hausdorff distance and the \frechet distance when the
simplification need not use the vertices of the input. Third, it is possible that the efficiency of our
algorithm for computing an optimal simplification with \frechet distance at most $\eps$ can be improved.
Fourth, we may consider optimal polyline simplifications using the weak \frechet distance.




\end{document}